\documentclass[a4paper]{article}
\usepackage[utf8]{inputenc}
\usepackage{amsmath}
\usepackage{amsfonts}
\usepackage{amssymb}
\usepackage{amsthm}
\usepackage{bbm}
\usepackage{xcolor}
\usepackage{graphicx}
\usepackage{hyperref}
\numberwithin{equation}{section}
\usepackage{geometry}
\usepackage{caption}
\usepackage{subcaption}

\DeclareMathOperator\supp{supp}

\newtheorem{theorem}{Theorem}[section]
\newtheorem{lemma}[theorem]{Lemma}
\newtheorem{definition}[theorem]{Definition}
\newtheorem{proposition}[theorem]{Proposition}
\newtheorem{corollary}[theorem]{Corollary}

\newtheorem{remark}[theorem]{Remark}
\newtheorem{example}[theorem]{Example}

\theoremstyle{definition}
\newtheorem{assumption}{Assumption}
\newtheorem*{notation*}{Notation}

\title{Power law in Sandwiched Volterra Volatility model}

\author{
    G. Di Nunno$^{1,2}$\\ \href{mailto:giulian@math.uio.no}{giulian@math.uio.no}
    \and
    A. Yurchenko-Tytarenko$^{1,}$\footnote{Corresponding author} \\ \href{mailto:antony@math.uio.no}{antony@math.uio.no}
}

\date{%
    $^1$Department of Mathematics, University of Oslo\\
    $^2$Department of Business and Management Science, NHH Norwegian School of Economics, Bergen\\
    [2ex]%
    \today
}

\begin{document}

\maketitle

\begin{abstract}
     In this paper, we present analytical proof demonstrating that the Sandwiched Volterra Volatility (SVV) model is able to reproduce the power-law behavior of the at-the-money implied volatility skew, provided the correct choice of the Volterra kernel. To obtain this result, we assess the second-order Malliavin differentiability of the volatility process and investigate the conditions that lead to explosive behavior in the Malliavin derivative. As a supplementary result, we also prove a general Malliavin product rule.
\end{abstract}

\noindent\textbf{Keywords:} SVV model, stochastic volatility, sandwiched process, Gaussian Volterra noise, Malliavin calculus\\
\textbf{MSC 2020:} 91G30; 60H10; 60H35; 60G22\\[9pt]
\textbf{Acknowledgements.} The present research is carried out within the frame and support of the ToppForsk project nr. 274410 of the Research Council of Norway with the title STORM: Stochastics for Time-Space Risk Models.

\section{Introduction}

One of the well-established benchmarks for evaluating option pricing models is comparing the model-generated Black-Scholes implied volatility surface $(\tau, \kappa) \mapsto \widehat \sigma (\tau, \kappa)$ with the empirically observed one $(\tau, \kappa) \mapsto \widehat \sigma_{\text{emp}} (\tau, \kappa)$. In this context, $\tau$ represents the \textit{time to maturity} and $\kappa := \log\frac{K}{e^{r\tau} S_0}$ is the \textit{log-moneyness} with $K$ denoting the strike, $S_0$ the current price of an underlying asset and $r$ being the instantaneous interest rate. In particular, for any fixed $\tau$, the values of $\widehat \sigma_{\text{emp}}(\tau,\kappa)$ plotted against $\kappa$ are known to produce convex ``\textit{smiley}'' patterns with negative slopes at-the-money (i.e. when $\kappa\approx 0$). Furthermore, as reported in e.g. \cite{Delemotte_Marco_Segonne_2023, Fouque_Papanicolaou_Sircar_Solna_2004, Gatheral_Jaisson_Rosenbaum_2014} or \cite[Subsection 2.2]{Di_Nunno_Kubilius_Mishura_Yurchenko-Tytarenko_2023}, the smile at-the-money becomes progressively steeper as $\tau \to 0$ with a rule-of-thumb behavior
\begin{equation}\label{eq: empirical power law intro}
    \left|\frac{\widehat \sigma_{\text{emp}} (\tau, \kappa) - \widehat \sigma_{\text{emp}} (\tau, \kappa')}{\kappa - \kappa'}\right| \propto \tau^{-\frac{1}{2} + H}, \quad \kappa,\kappa' \approx 0, \quad H \in \left(0, \frac 1 2\right).
\end{equation}
The phenomenon \eqref{eq: empirical power law intro} is known as the \textit{power law} of the at-the-money implied volatility skew, and if one wants to replicate it, one may look for a model with
\begin{equation}\label{eq: theoretical power law intro}
    \left|\frac{\partial \widehat \sigma}{\partial \kappa} (\tau, \kappa)\right|_{\kappa = 0} = O(\tau^{-\frac{1}{2} + H}), \quad \tau \to 0.
\end{equation}
However, it turns out that the property \eqref{eq: theoretical power law intro} is not easy to obtain: for example, as discussed in \cite[Section 7.1]{Alos_Leon_Vives_2007} or \cite[Remark 11.3.21]{Lee_2006}, classical Brownian diffusion stochastic volatility models fail to produce implied volatilities with power law \eqref{eq: theoretical power law intro}. In the literature, \eqref{eq: theoretical power law intro} is usually replicated by introducing a volatility process with very low H\"older regularity within the \textit{rough} volatility framework popularized by Gatheral, Jaisson and Rosenbaum in their landmark paper \cite{Gatheral_Jaisson_Rosenbaum_2014}. The efficiency of this approach can be explained as follows. 
\begin{itemize}
    \item On the one hand, a theoretical result of Fukasawa \cite{Fukasawa_2021} suggests that the volatility process cannot be H\"older continuous of a high order in continuous non-arbitrage models exhibiting the property \eqref{eq: theoretical power law intro}. In other words, the roughness of volatility is, in some sense, a necessary condition to reproduce \eqref{eq: theoretical power law intro} (at least in the fully continuous setting).

    \item On the other hand, as proved in the seminal 2007 paper \cite{Alos_Leon_Vives_2007} of Alòs, León and Vives, the short-term explosion \eqref{eq: theoretical power law intro} of the implied volatility skew can be deduced from the explosion of the Malliavin derivative of volatility. In particular, the latter characteristic is exhibited by \textit{fractional Brownian motion with $H<1/2$}, a common driver in rough volatility literature.
\end{itemize}

However, despite the ability to reproduce the power law \eqref{eq: theoretical power law intro}, rough volatility models are not perfect. In particular,
\begin{itemize}
    \item[--] in the specific context of fractional Brownian motion, roughness contradicts the observations \cite{Bollerslev_Mikkelsen_1996, Ding_Granger_1996, Ding_Granger_Engle_1993, Lobato_Velasco_2000, Willinger_Taqqu_Teverovsky_1999} of long memory on the market;

    \item[--] in addition, volatility processes with long memory seem to be better in replicating the shape of implied volatility for longer maturities \cite{Comte_Renault_1998, Funahashi_Kijima_2017, Funahashi_Kijima_2017-1};

    \item[--] furthermore, there is no guaranteed procedure of transition between physical and pricing measures: it is not always clear whether the volatility process $\sigma = \{\sigma(t),~t\in[0,T]\}$ hits zero and therefore the integral $\int_0^t \frac{1}{\sigma^2(s)}ds$ that is typically present in martingale densities (see e.g. \cite{BGP2000}) may be poorly defined; 

    \item[--] just like many classical Brownian stochastic volatility models (see e.g. \cite{Andersen_Piterbarg_2006}), they may suffer from moment explosions in price, which results in complications with the pricing of some assets, quadratic hedging, and numerical methods.
\end{itemize}
For more details on rough volatility, we refer the reader to a recent review \cite[Subsection 3.3.2]{Di_Nunno_Kubilius_Mishura_Yurchenko-Tytarenko_2023} or a regularly updated literature list on the subject \cite{Rough_volatility_literature}.

Recently, a series of papers \cite{DiNunno_Mishura_Yurchenko-Tytarenko_2022, DiNunno_Mishura_Yurchenko-Tytarenko_2023b, DNMYT_2023_spdbhn} introduced the \textit{Sandwiched Volterra Volatility (SVV) model} which accounts for all the problems mentioned above. More precisely, the volatility process $Y=\{Y(t),~t\in[0,T]\}$ is assumed to follow the stochastic differential equation
\[
    Y(t) = y_0 + \int_0^t b(s, Y(s))ds + Z(t)
\]
driven by a general H\"older continuous Gaussian Volterra process $Z(t) = \int_0^t \mathcal K(t,s)dB(s)$. The special part of the equation above is the drift $b$. It is assumed that there are two continuous functions $0 < \varphi < \psi$ such that for some $\varepsilon > 0$
\begin{align*}
    b(t,y) &\ge \frac{C}{(y-\varphi(t))^\gamma}, & y \in(\varphi(t), \varphi(t) + \varepsilon),
    \\
    b(t,y) &\le -\frac{C}{(\psi(t) - y)^\gamma}, & y \in(\psi(t) - \varepsilon, \psi(t)).
\end{align*}
Such an explosive nature of the drift resembling the one in SDEs for Bessel processes (see e.g. \cite[Chapter XI]{Revuz_Yor_1999}) or singular SDEs of \cite{Hu2008} ensures that, with probability 1,
\[
    0 < \varphi(t) < Y(t) < \psi(t),
\]
which immediately solves the moment explosion problem (see e.g. \cite[Theorem 2.6]{DiNunno_Mishura_Yurchenko-Tytarenko_2022}) and allows for a transparent transition between physical and pricing measures \cite[Subsection 2.2]{DiNunno_Mishura_Yurchenko-Tytarenko_2022}. In addition, the flexibility in the choice of the kernel $\mathcal K$ \textit{should} allow to replicate both the long memory and the power law behavior \eqref{eq: theoretical power law intro}.

The main goal of this paper is to give the theoretical justification to the latter claim: we prove that, with the correct choice of the Volterra kernel $\mathcal K$, the SVV model indeed reproduces \eqref{eq: theoretical power law intro}. In order to do that, we employ the fundamental result \cite[Theorem 6.3]{Alos_Leon_Vives_2007} of Alòs, León and Vives mentioned above and check that the Malliavin derivative $DY(t)$ indeed exhibits explosive behavior. The difficulty of this approach is as follows. While the first-order Malliavin differentiability of $Y(t)$ is established in \cite[Section 3]{DiNunno_Mishura_Yurchenko-Tytarenko_2022} with
\begin{equation*}
    D_s Y(t) = \mathcal K(t,s) + \int_s^t \mathcal K(u,s) b'_y(u, Y(u)) \exp\left\{\int_u^t b'_y(v,Y(v))dv\right\}du,
\end{equation*}
\cite[Theorem 6.3]{Alos_Leon_Vives_2007} actually demands the existence of the second-order Malliavin derivative. In principle, it is intuitively clear how this derivative should look like:
\begin{equation}\label{eq: wrong computation}
\begin{aligned}
    D_rD_sY(t) &=  D_r \int_s^t \mathcal K(u,s) b'_y(u, Y(u)) \exp\left\{\int_u^t b'_y(v,Y(v))dv\right\}du
    \\
    & =  \int_s^t \mathcal K(u,s) D_r\left[b'_y(u, Y(u)) \exp\left\{\int_u^t b'_y(v,Y(v))dv\right\}\right]du
    \\
    & = \int_s^t \mathcal K(u,s) \exp\left\{\int_u^t b'_y(v,Y(v))dv\right\} D_r\left[b'_y(u, Y(u))\right] du
    \\
    &\quad + \int_s^t \mathcal K(u,s) b'_y(u, Y(u)) D_r\left[\exp\left\{\int_u^t b'_y(v,Y(v))dv\right\}\right]du
    \\
    & = \int_s^t \mathcal K(u,s) b''_{yy}(u, Y(u))\exp\left\{\int_u^t b'_y(v,Y(v))dv\right\} D_r\left[Y(u)\right] du
    \\
    &\quad + \int_s^t \mathcal K(u,s) b'_y(u, Y(u)) \exp\left\{\int_u^t b'_y(v,Y(v))dv\right\} \int_u^t b''_{yy}(v,Y(v)) D_r[Y(v)]dvdu.
\end{aligned}
\end{equation}
However, the computations in \eqref{eq: wrong computation} are far from straightforward to be justified. For example, the functions $y\mapsto b'_y(t,y)$ and $y\mapsto b''_{yy}(t,y)$ demonstrate explosive behavior as $y\to\varphi(t)+$ and $y\to\psi(t)-$ for any $t\in[0,T]$. This makes it impossible to use the classical Malliavin chain rules such as \cite[Proposition 1.2.3]{Nualart_2006} requiring boundedness of the derivative or \cite[Proposition 1.2.4]{Nualart_2006} demanding the Lipschitz condition. In order to overcome this issue, we have to use some special properties of the volatility process established in \cite{DNMYT_2023_spdbhn} and tailor a version of the Malliavin chain rule specifically for our needs.

The paper is organized as follows. In Section \ref{sec: sandwich facts}, we provide some necessary details about the sandwiched volatility process $Y$. In Section \ref{sec: Malliavin}, we prove second-order Malliavin differentiability of $Y(t)$. Finally, in Section \ref{sec: power law}, we use \cite[Theorem 6.3]{Alos_Leon_Vives_2007} to determine conditions on the kernel under which the SVV model reproduces \eqref{eq: theoretical power law intro}. In Appendix \ref{appendix}, we gather some necessary facts from Malliavin calculus, list some of the notation and, in addition, we prove a general Malliavin product rule to fit our purposes and that we were not able to find in the literature. 

\section{Preliminaries on sandwiched processes}\label{sec: sandwich facts}

In this section, we gather all the necessary details about the main object of our study: the class of \emph{sandwiched processes driven by H\"older-continuous Gaussian Volterra noises}.

Fix some $T\in(0,\infty)$ and consider a kernel $\mathcal K: [0,T]^2 \to \mathbb R$ satisfying the following assumptions.

\begin{assumption}\label{assum: K}
    The kernel $\mathcal K$ is of Volterra type, i.e. $\mathcal K(t,s) = 0$ whenever $t\le s$, and
    \begin{itemize}
        \item[(K1)] $\mathcal K$ is square-integrable, i.e.
        \[
            \int_0^T\int_0^T \mathcal K^2(t,s)dsdt < \infty,
        \]

        \item[(K2)] there exists $H\in(0,1)$ such that for all $\lambda\in(0,H)$ and $0 \le t_1 \le t_2 \le T$
        \[
            \int_0^{T} (\mathcal K(t_2,s) - \mathcal K(t_1,s))^2ds \le C_\lambda |t_2-t_1|^\lambda,
        \]
        where $C_\lambda > 0$ is some constant depending on $\lambda$.
    \end{itemize}
\end{assumption}

\begin{remark}\label{rem: K}
    Note that items (K1) and (K2) of Assumption \ref{assum: K} jointly imply that
    \begin{equation}\label{eq: sup of K}
        \sup_{t\in[0,T]}\int_0^T \mathcal K^2(t,s)ds < \infty.
    \end{equation}
\end{remark}
Let $B = \{B(t),~t\in[0,T]\}$ be a standard Brownian motion. Assumption \ref{assum: K} allows to define a \textit{Gaussian Volterra process}
\begin{equation}\label{eq: def Z}
    Z(t) := \int_0^t \mathcal K(t,s)dB(s), \quad t\in [0,T],
\end{equation}
and, moreover, Assumption \ref{assum: K}(K2) together with \cite[Theorem 1 and Corollary 4]{ASVY2014} imply that $Z$ has a modification with H\"older continuous trajectories of any order $\lambda\in(0,H)$. In what follows, we always use this modification of $Z$: in other words, with probability 1, for any $\lambda\in(0,H)$ there exists a random variable $\Lambda = \Lambda(\lambda)>0$ such that for all $0\le t_1 \le t_2 \le T$
\begin{equation}\label{eq: Z Holder}
    |Z(t_2) - Z(t_1)| \le \Lambda |t_2-t_1|^\lambda.
\end{equation}
Furthermore, as stated in \cite[Theorem 1]{ASVY2014}, the random variable $\Lambda$ from \eqref{eq: Z Holder} can be chosen such that
\begin{equation}\label{eq: Lambda moments}
    \mathbb E[\Lambda^r] < \infty \quad \text{for all }r\in\mathbb R.
\end{equation}
In what follows, we assume that \eqref{eq: Lambda moments} always holds.

Next, denote
\begin{equation}\label{eq: D}
\begin{aligned}
    \mathcal D &:= \{(t,y)\in[0,T]\times\mathbb R~|~\varphi(t) < y < \psi(t)\}, 
    \\
    \overline{\mathcal D} &:= \{(t,y)\in[0,T]\times\mathbb R~|~\varphi(t) \le y \le \psi(t)\}.
\end{aligned}
\end{equation}
Take $H\in(0,1)$ from Assumption \ref{assum: K}(K2), consider two $H$-H\"older continuous functions $\varphi$, $\psi$: $[0,T] \to \mathbb R$ such that
\[
    0 < \varphi(t) < \psi(t) \quad \text{for all }t\in[0,T],
\]
and define a function $b$: $\mathcal D \to \mathbb R$ as
\begin{equation}\label{eq: b}
    b(t,y) := \frac{\theta_1(t)}{(y - \varphi(t))^{\gamma_1}} - \frac{\theta_2(t)}{(\psi(t) - y)^{\gamma_2}} + a(t,y),
\end{equation}
where the coefficients in \eqref{eq: b} satisfy the following assumption.

\begin{assumption}\label{assum: b}
    The constants $\gamma_1$, $\gamma_2> 0$ and functions $\theta_1$, $\theta_2$, $a$ are such that
    \begin{itemize}
        \item[(B1)] $\gamma_1 > \frac{1}{H} - 1$, $\gamma_2 > \frac{1}{H} - 1$ with $H\in(0,1)$ being from Assumption \ref{assum: K}(K2);

        \item[(B2)] the functions $\theta_1$, $\theta_2$: $[0,T]\to\mathbb R$ are strictly positive and continuous;

        \item[(B3)] the function $a$: $[0,T]\times\mathbb R\to\mathbb R$ is locally Lipschitz in $y$ uniformly in $t$, i.e. for any $N > 0$ there exists a constant $C_N>0$ that does not depend on $t$ such that
        \[
            |a(t,y_2) - a(t, y_1)| \le C_N|y_2 - y_1|, \quad t\in [0,T], \quad y_1, y_2 \in [-N,N];
        \]
        
        \item[(B4)] $a$: $[0,T]\times\mathbb R\to\mathbb R$ is two times differentiable w.r.t. the spatial variable $y$ with $a$, $a'_y$, $a''_{yy}$ all being continuous on $[0,T]\times\mathbb R$.
    \end{itemize}
\end{assumption}

\begin{remark}\label{rem: exp bounded}
    Note that $b'_y$ is bounded from above on $\mathcal D$: indeed,
    \begin{align*}
        b'_y(t,y) & = -\frac{\gamma_1\theta_1(t)}{(y - \varphi(t))^{\gamma_1+1}} - \frac{\gamma_2\theta_2(t)}{(\psi(t) - y)^{\gamma_2+1}} + a'_y(t,y)
        \\
        & < \max_{(t,y)\in\overline{\mathcal D}} a'_y(t,y) < \infty.
    \end{align*}
\end{remark}
Finally, fix $\varphi(0) < y_0 < \psi(0)$ and consider a stochastic differential equation of the form
\begin{equation}\label{eq: sandwiched SDE}
    Y(t) = y_0 + \int_0^t b(s, Y(s))ds + Z(t), \quad t\in [0,T].
\end{equation}
By \cite[Theorem 4.1]{DNMYT_2023_spdbhn}, under Assumptions \ref{assum: K} and \ref{assum: b}, the SDE \eqref{eq: sandwiched SDE} has a unique strong solution $Y=\{Y(t),~t\in[0,T]\}$. Moreover, with probability 1,
\begin{equation}\label{eq: sandwich property}
    \varphi(t) < Y(t) < \psi(t) \quad \text{for all } t\in [0,T].
\end{equation}
\begin{remark}
    Motivated by the property \eqref{eq: sandwich property}, we will call the solution $Y$ of \eqref{eq: sandwiched SDE} a \textit{sandwiched} process.
\end{remark}

In what follows, we will need to analyze the behavior of stochastic processes $|b(t,Y(t))|$, $|b'_y(t,Y(t))|$ and $|b''(t,Y(t))|$, $t\in[0,T]$. In this regard, the property \eqref{eq: sandwich property} alone is not sufficient: the process $Y$ can, in principle, approach the bounds $\varphi$ and $\psi$ which results in an explosive growth of the processes mentioned above. Luckily, \cite[Theorem 4.2]{DNMYT_2023_spdbhn} provides a refinement of \eqref{eq: sandwich property} allowing for a more precise control of $Y$ near $\varphi$ and $\psi$. We give a slightly reformulated version of this result below.

\begin{theorem}\label{th: main property of Y}
    Let Assumptions \ref{assum: K} and \ref{assum: b} hold and $\lambda\in(0,H)$, $\Lambda = \Lambda(\lambda) > 0$ be from \eqref{eq: Z Holder}. Then there exist deterministic constants $C_Y = C_Y(\lambda)>0$ and $\beta = \beta(\lambda)>0$ such that
    \begin{equation*}
        \varphi(t) + \frac{C_Y}{(1+\Lambda)^\beta} \le Y(t) \le \psi(t) - \frac{C_Y}{(1+\Lambda)^\beta} \quad \text{for all }t\in[0,T].
    \end{equation*}
    In particular, since $\Lambda$ can be chosen to have moments of all orders, for all $r\ge 0$
    \begin{equation*}
        \mathbb E\left[\sup_{t\in[0,T]}\frac{1}{(Y(t) - \varphi(t))^r}\right] < \infty, \quad  \mathbb E\left[\sup_{t\in[0,T]}\frac{1}{(\psi(t) - Y(t))^r}\right] < \infty.
    \end{equation*}
\end{theorem}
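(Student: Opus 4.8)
The plan is to prove the two pathwise bounds $Y(t)-\varphi(t)\ge C_Y(1+\Lambda)^{-\beta}$ and $\psi(t)-Y(t)\ge C_Y(1+\Lambda)^{-\beta}$ for all $t\in[0,T]$ by a deterministic, trajectory-wise argument; the asserted moment estimates then follow immediately, since these bounds give $\sup_{t}(Y(t)-\varphi(t))^{-r}\le C_Y^{-r}(1+\Lambda)^{\beta r}$ and $\Lambda$ has finite moments of all orders by \eqref{eq: Lambda moments}. The two bounds are symmetric --- the singular terms in \eqref{eq: b} push $Y$ away from $\varphi$ and $\psi$ respectively, with the exponents $\gamma_1$ and $\gamma_2$ playing interchangeable roles --- so I would only treat the lower barrier. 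Fix a realization, write $g(t):=Y(t)-\varphi(t)$ (continuous and strictly positive by \eqref{eq: sandwich property}), and let $m_*:=\min_{t\in[0,T]}g(t)$, attained at some $t_*$. If $m_*\ge\tfrac12(y_0-\varphi(0))$ the lower bound already holds with a fixed constant, so I would assume $m_*$ is as small as the argument requires.

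The heart of the matter is a last-exit estimate for the deepest excursion. Put $\tau:=\sup\{u<t_*:\,g(u)=2m_*\}$, which is well defined in $(0,t_*)$ because $g(0)\ge 2m_*$ while $g(t_*)=m_*$, and set $h:=t_*-\tau$; by construction $m_*\le g(u)\le 2m_*$ on $[\tau,t_*]$. For $m_*$ small this keeps $Y$ uniformly away from $\psi$, so the two non-singular contributions to $b$ are bounded and \eqref{eq: b} yields $b(u,Y(u))\ge \theta_1^{\min}(2m_*)^{-\gamma_1}-C_0\ge c_1 m_*^{-\gamma_1}\ge 0$ on $[\tau,t_*]$, where $\theta_1^{\min}:=\min_{[0,T]}\theta_1>0$ and $c_1:=\theta_1^{\min}2^{-\gamma_1-1}$. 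Expressing $g(t_*)-g(\tau)=-m_*$ through the SDE \eqref{eq: sandwiched SDE} and bounding the increments of $\varphi$ (its $H$-Hölder norm) and of $Z$ (the bound \eqref{eq: Z Holder}) by $A h^\lambda$ with $A:=C(1+\Lambda)$, I would extract two competing inequalities for $h$: drift domination gives $c_1 m_*^{-\gamma_1}h\le A h^\lambda$, hence $h\le (Am_*^{\gamma_1}/c_1)^{1/(1-\lambda)}$; while the fact that the downward move of size $m_*$ can only be supplied by $Z$ and $\varphi$ (the drift being nonnegative) gives $m_*\le A h^\lambda$, hence $h\ge (m_*/A)^{1/\lambda}$.

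Eliminating $h$ between these two bounds --- raising to the power $\lambda(1-\lambda)$ and simplifying --- collapses everything to the single inequality $m_*^{\,1-\lambda(1+\gamma_1)}\le A/c_1^\lambda$. This is the decisive step and exactly where Assumption \ref{assum: b}(B1) enters: from $\gamma_1>\tfrac1H-1$ we get $1+\gamma_1>\tfrac1H$, so $\tfrac{1}{1+\gamma_1}<H$ and one may fix $\lambda\in(0,H)$ with $\lambda(1+\gamma_1)>1$; the exponent $1-\lambda(1+\gamma_1)=-p$ is then strictly negative, and the inequality inverts into the genuine lower bound $m_*\ge (c_1^\lambda/A)^{1/p}$ with $p:=\lambda(1+\gamma_1)-1>0$. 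Since $A=C(1+\Lambda)$, this is precisely the claimed form with $\beta:=1/p$ and $C_Y:=(c_1^\lambda/C)^{1/p}$ (intersected with the fixed constant from the trivial case $m_*\ge\tfrac12(y_0-\varphi(0))$). I expect the main obstacle to be organizing this two-sided control of the excursion length $h$ so that the resulting exponent carries the correct sign; once (B1) secures $\lambda(1+\gamma_1)>1$ the rest is bookkeeping, and the symmetric treatment of $\psi$ together with \eqref{eq: Lambda moments} completes the proof.
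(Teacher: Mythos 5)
Your argument is sound, but note first that the paper does not actually prove this theorem: it is quoted, in the authors' words as a ``slightly reformulated version'', from \cite[Theorem 4.2]{DNMYT_2023_spdbhn}. What you have done is reconstruct that companion-paper proof, and your reconstruction is faithful to it in all essentials: the last-exit time $\tau$ from the level $2m_*$, the two competing estimates on the excursion $[\tau,t_*]$ (drift domination $c_1 m_*^{-\gamma_1}h\le Ah^{\lambda}$ versus the requirement that the descent of size $m_*$ be supplied by the noise and the barrier, $m_*\le Ah^{\lambda}$), and the elimination of $h$ yielding $m_*^{1-\lambda(1+\gamma_1)}\le A/c_1^{\lambda}$, with Assumption \ref{assum: b}(B1) forcing the exponent negative, is exactly the mechanism of that proof. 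The individual steps all check out: the case split makes $\tau$ well defined, the drift bound on $[\tau,t_*]$ holds once $m_*$ is below a deterministic threshold (which must also keep $Y$ away from $\psi$ so that the $\theta_2$-term and $a$ are absorbed into your $C_0$), and the $H$-H\"older increments of $\varphi$ are correctly absorbed into $A=C(1+\Lambda)$ since $\lambda<H$.

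One caveat: the statement as reformulated in this paper asserts the bound for \emph{every} $\lambda\in(0,H)$, with constants $C_Y(\lambda)$, $\beta(\lambda)$, whereas your proof delivers it only for $\lambda$ satisfying $\lambda(1+\gamma_1)>1$ and $\lambda(1+\gamma_2)>1$. For smaller $\lambda$ the decisive exponent $1-\lambda(1+\gamma_i)$ is nonnegative, the final inequality carries no information, and there is no pathwise comparison between $\Lambda(\lambda)$ and $\Lambda(\lambda^*)$, $\lambda^*>\lambda$, that would let you transfer the bound; so this range restriction is intrinsic to the method. This matches the hypothesis of the original theorem in \cite{DNMYT_2023_spdbhn}, so the discrepancy is really with the paper's loose restatement rather than with your argument, and it is harmless downstream: Assumption \ref{assum: b}(B1) guarantees admissible $\lambda$ exist in $(0,H)$, and the ``in particular'' moment estimates --- the only part of Theorem \ref{th: main property of Y} used later, e.g.\ in Proposition \ref{prop: bounds for derivatives} --- follow from a single admissible $\lambda$. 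Two small finishing touches you should make explicit: a single pair $(C_Y,\beta)$ valid for both barriers (take $\beta$ the larger of the two exponents and shrink $C_Y$, using $(1+\Lambda)^{-\beta}\le(1+\Lambda)^{-\beta_i}\wedge 1$), and the final intersection of your excursion bound with the deterministic thresholds from the trivial cases.
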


We finalize this section by citing the first-order Malliavin differentiability result for the sandwiched process \eqref{eq: sandwiched SDE} proved in \cite[Section 3]{DiNunno_Mishura_Yurchenko-Tytarenko_2022}.

\begin{theorem}\label{th: 1 order Malliavin differentiability}
    Let Assumptions \ref{assum: K} and \ref{assum: b} hold and $Y$ be the sandwiched process given by \eqref{eq: sandwiched SDE}. Then, for any $t\in[0,T]$, $Y(t) \in \mathbb D^{1,2}$ and, with probability 1, for a.a. $s\in [0,T]$
    \begin{equation}\label{eq: Malliavin derivative of Y(t)}
        D_s Y(t) = \mathcal K(t,s) + \int_s^t \mathcal K(u,s) b'_y(u, Y(u)) \exp\left\{\int_u^t b'_y(v,Y(v))dv\right\}du.
    \end{equation}
\end{theorem}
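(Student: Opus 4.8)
The plan is to obtain both the membership $Y(t)\in\mathbb D^{1,2}$ and the explicit formula \eqref{eq: Malliavin derivative of Y(t)} by approximating the singular drift $b$ with globally Lipschitz drifts and then invoking the closedness of the operator $D$. Concretely, for each large $n$ I would define $b_n\colon[0,T]\times\mathbb R\to\mathbb R$ to coincide with $b$ on the ``safe strip'' $\{(t,y):\varphi(t)+\tfrac1n\le y\le\psi(t)-\tfrac1n\}$ and to be extended to constants in $y$ outside it, namely $b_n(t,y)=b(t,\varphi(t)+\tfrac1n)$ for $y<\varphi(t)+\tfrac1n$ and $b_n(t,y)=b(t,\psi(t)-\tfrac1n)$ for $y>\psi(t)-\tfrac1n$. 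By construction $b_n$ is Lipschitz in $y$ uniformly in $t$ (with a constant $L_n$ of order $n^{\max(\gamma_1,\gamma_2)+1}$ coming from the steep slope near the junctions) and, crucially, $(b_n)'_y\le M:=\max_{\overline{\mathcal D}}a'_y$ everywhere, since the extension is flat and the interior slope is controlled by Remark \ref{rem: exp bounded}. This uniform one-sided bound is the essential design feature: it keeps the exponential factors below uniformly bounded, independently of $n$.

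For fixed $n$ the SDE \eqref{eq: sandwiched SDE} with $b$ replaced by $b_n$ has additive Gaussian Volterra noise and a globally Lipschitz drift, so a standard Picard argument gives a unique solution $Y_n$ with $Y_n(t)\in\mathbb D^{1,2}$, whose derivative solves, for $s<t$, the linear equation $D_sY_n(t)=\mathcal K(t,s)+\int_s^t (b_n)'_y(u,Y_n(u))D_sY_n(u)\,du$ (and $D_sY_n(t)=0$ for $s>t$). Solving this Volterra equation by variation of constants yields the explicit representation
\[
    D_sY_n(t)=\mathcal K(t,s)+\int_s^t \mathcal K(u,s)(b_n)'_y(u,Y_n(u))\exp\Big\{\int_u^t (b_n)'_y(v,Y_n(v))\,dv\Big\}du,
\]
which one checks directly by substituting back and using $\int_u^t (b_n)'_y(w,Y_n(w))\exp\{\int_u^w(b_n)'_y\,dv\}\,dw=\exp\{\int_u^t(b_n)'_y\,dv\}-1$.

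The bridge to $Y$ is the sandwich control of Theorem \ref{th: main property of Y}. On the event $A_n:=\{\,C_Y/(1+\Lambda)^\beta>\tfrac1n\,\}$ the true solution $Y$ stays inside the safe strip for all $t$, so for every $m\ge n$ the process $Y$ also solves the $b_m$-equation and, by uniqueness, $Y_m=Y$ while $D_sY_m(t)$ coincides with the right-hand side $\Xi(s,t)$ of \eqref{eq: Malliavin derivative of Y(t)} on $A_n$. Since $\Lambda$ has Gaussian-type tails (being essentially a supremum of a Gaussian process, cf. \eqref{eq: Lambda moments}), $\mathbb P(A_n^c)$ decays super-polynomially in $n$. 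Moreover, the candidate limit $\Xi$ itself lies in $L^2(\Omega;L^2([0,T]))$: the exponential factor is at most $e^{MT}$ by the one-sided bound, while $\sup_u|b'_y(u,Y(u))|$ has moments of all orders by Theorem \ref{th: main property of Y}, and (K1) together with \eqref{eq: sup of K} controls the remaining $\mathcal K$-integrals.

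It then remains to pass to the limit. I would show $Y_n(t)\to Y(t)$ in $L^2(\Omega)$ and $DY_n(t)\to\Xi(\cdot,t)$ in $L^2(\Omega;L^2([0,T]))$ by splitting each expectation over $A_n$ and $A_n^c$: on $A_n$ the differences vanish identically, while on $A_n^c$ one applies Hölder's inequality, bounding the $n$-dependent factors crudely (e.g. $\int_0^T|D_sY_n(t)|^2\,ds\le C(1+L_n^2)$ deterministically, using once more the uniform bound $e^{MT}$ on the exponential) and absorbing this polynomial-in-$n$ growth against the super-polynomial decay of $\mathbb P(A_n^c)^{1/q}$. With both convergences established, the closedness of $D$ gives $Y(t)\in\mathbb D^{1,2}$ and $D_sY(t)=\Xi(s,t)$, which is exactly \eqref{eq: Malliavin derivative of Y(t)}. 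The main obstacle I anticipate is precisely this balancing act: the Lipschitz constants $L_n$ blow up, so the argument only closes because the one-sided bound $(b_n)'_y\le M$ keeps the exponentials uniformly bounded and because the Gaussian tails of $\Lambda$ (via Theorem \ref{th: main property of Y}) force $\mathbb P(A_n^c)$ to decay fast enough to kill the crude, polynomially growing derivative estimates on $A_n^c$.
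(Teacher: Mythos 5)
First, note that the paper you are working from does not actually prove Theorem \ref{th: 1 order Malliavin differentiability}: it imports it from \cite[Section 3]{DiNunno_Mishura_Yurchenko-Tytarenko_2022}, so your attempt can only be compared with the strategy of that cited work (and with the second-order arguments of Section \ref{sec: Malliavin}, which are in the same spirit). Your route --- truncate the singular drift to a globally Lipschitz $b_n$ agreeing with $b$ on the strip $\{\varphi+\tfrac1n\le y\le\psi-\tfrac1n\}$, solve the approximating SDE, obtain the explicit derivative by variation of constants, and pass to the limit using closedness of $D$ --- is exactly this standard approach. Your distinctive contribution is the clean limiting argument: on $A_n=\{C_Y/(1+\Lambda)^\beta>\tfrac1n\}$ the sandwich bound of Theorem \ref{th: main property of Y} makes $Y$ itself a solution of the $b_m$-equation for all $m\ge n$, whence $Y_m=Y$ there by pathwise uniqueness and the explicit derivative formulas coincide, while on $A_n^c$ the crude polynomial-in-$n$ bounds are absorbed by the fast decay of $\mathbb P(A_n^c)$. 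Note that this decay does not require the ``Gaussian-type tails'' you invoke: \eqref{eq: Lambda moments} (all moments of $\Lambda$) plus Chebyshev's inequality already gives $\mathbb P(A_n^c)\le C_r n^{-r/\beta}$ for every $r>0$, which is all your argument uses.

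The one step that does not work as written is the construction of $b_n$ itself. The flat extension ($b_n(t,y)=b(t,\varphi(t)+\tfrac1n)$ below the strip, and similarly above) is Lipschitz but not $C^1$: it has kinks along the two junction curves. Consequently you may not write $D_sY_n(t)=\mathcal K(t,s)+\int_s^t (b_n)'_y(u,Y_n(u))D_sY_n(u)\,du$, because $(b_n)'_y$ is undefined where $Y_n(u)$ sits on a junction, and the Lipschitz chain rule \cite[Proposition 1.2.4]{Nualart_2006} only yields a bounded random factor in place of $(b_n)'_y(u,Y_n(u))$ unless the law of $Y_n(u)$ is absolutely continuous --- which Assumption \ref{assum: K} does not guarantee (it admits degenerate kernels, e.g. $\mathcal K$ vanishing on whole time intervals, making $Y_n(u)$ deterministic there). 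The repair is easy and leaves the rest of your argument untouched: extend $b$ outside the strip by a $C^1$ interpolation that becomes constant, keeping $b_n=b$ exactly on the strip and $(b_n)'_y\le\max(M,0)$ everywhere, where $M=\max_{\overline{\mathcal D}}a'_y$ (observe also that $M$ itself may be negative, so the uniform bound on the exponential factors should read $e^{\max(M,0)T}$ rather than $e^{MT}$). With this modification the Picard step, the variation-of-constants identity, the event argument on $A_n$, and the closedness of $D$ all go through, and your proposal becomes a correct proof.
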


\begin{remark}
    The result above actually holds for more general drifts than the one given in \eqref{eq: b}. The same is also, in principle, true for the results of the subsequent sections. Namely, it would be sufficient to assume that there exist deterministic constants $c>0$, $r>0$, $\gamma > \frac{1}{H} - 1$ and $0< y_* < \max_{t\in[0,T]}|\psi(t) - \varphi(t)|$ such that
    \begin{itemize}
        \item $b$: $\mathbb D \to \mathbb R$ is continuous on $\mathbb D$ and has continuous partial derivatives $b'_y$, $b''_{yy}$;

        \item for any $0 < \varepsilon < \frac{1}{2}\max_{t\in[0,T]}|\psi(t) - \varphi(t)|$,
        \[
            |b(t,y_2) - b(t,y_1)| \le \frac{c}{\varepsilon^r}|y_2 - y_1|, \quad t\in[0,T], \quad \varphi(t)+\varepsilon \le y_1 \le y_2 \le \psi(t) - \varepsilon;
        \]
        
        \item $b$ has an explosive growth to $\infty$ near $\varphi$ and explosive decay to $-\infty$ near $\psi$ of order $\gamma > \frac{1}{H} - 1$, i.e.
        \begin{align*}
            b(t,y) &\ge \frac{c}{(y-\varphi(t))^\gamma}, & y \in(\varphi(t), \varphi(t) + y_*),
            \\
            b(t,y) &\le -\frac{c}{(\psi(t) - y)^\gamma}, & y \in(\psi(t) - y_*, \psi(t));
        \end{align*}

        \item for all $(t,y) \in \mathcal D$, the partial derivatives $b'_y$ and $b''_{yy}$ satisfy
        \[
            - C\left( 1 + \frac{c}{(y-\varphi(t))^r} + \frac{c}{(\psi(t) - y)^r}\right) < b'_y(t,y) < C
        \]
        and
        \[
            |b''_{yy}| \le C\left( 1 + \frac{c}{(y-\varphi(t))^r} + \frac{c}{(\psi(t) - y)^r}\right).
        \]
    \end{itemize}
    However, since \eqref{eq: b} is the most natural choice satisfying these assumptions, we stick to this shape for notational convenience.
\end{remark}

\section{Second-order Malliavin differentiability}\label{sec: Malliavin}

Let Assumptions \ref{assum: K} and \ref{assum: b} hold and $Y = \{Y(t),~t\in[0,T]\}$ be the sandwiched process defined by \eqref{eq: sandwiched SDE} with the drift \eqref{eq: b}.

\begin{notation*}
    Here and in the sequel, $C$ will denote any positive deterministic constant the exact value of which is not relevant. Note that $C$ may change from line to line (or even within one line).
\end{notation*}

The main goal of this section is to establish second-order Malliavin differentiability of the sandwiched process \eqref{eq: sandwiched SDE} and compute the corresponding derivative explicitly. As mentioned above, the main difficulty lies in controlling the behavior of $b(t,Y(t))$, $b'_y(t,Y(t))$ and $b''_{yy}(t,Y(t))$ whenever $Y(t)$ approaces the bounds. Luckily, Theorem \ref{th: main property of Y} gives all the necessary tools to do that as summarized in the following proposition.
\begin{proposition}\label{prop: bounds for derivatives}
    There exists a random variable $\xi > 0$ such that 
    \begin{itemize}
        \item for any $p\ge 1$, $\mathbb E[\xi^p] < \infty$;

        \item for any $t\in[0,T]$,
        \[
            |b(t,Y(t))| + |b'_y(t,Y(t))| + |b''_{yy}(t,Y(t))| < \xi.
        \]
    \end{itemize}
    In particular, for any $p\ge 1$,
    \[
        \mathbb E\left[ \sup_{t\in[0,T]} \left(|b(t,Y(t))|^p + |b'_y(t,Y(t))|^p + |b''_{yy}(t,Y(t))|^p\right) \right] < \infty.
    \]
\end{proposition}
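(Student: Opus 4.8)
The plan is to construct the random variable $\xi$ explicitly by bounding each of the three terms $|b(t,Y(t))|$, $|b'_y(t,Y(t))|$ and $|b''_{yy}(t,Y(t))|$ in terms of the reciprocal distances $(Y(t)-\varphi(t))^{-1}$ and $(\psi(t)-Y(t))^{-1}$, and then invoking Theorem \ref{th: main property of Y} to control those reciprocals uniformly in $t$ with all moments finite. The key observation is that the explosive parts of $b$ and its derivatives are all of the form $\theta_i(t)/(\text{distance})^{\text{power}}$, while the remainder $a$ and its derivatives are continuous, hence bounded on the compact set $\overline{\mathcal D}$ once we know $(t,Y(t))\in\overline{\mathcal D}$.

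First I would write out the explicit formulas for $b$, $b'_y$ and $b''_{yy}$ from \eqref{eq: b}. Differentiating twice in $y$ gives
\begin{align*}
    b(t,y) &= \frac{\theta_1(t)}{(y-\varphi(t))^{\gamma_1}} - \frac{\theta_2(t)}{(\psi(t)-y)^{\gamma_2}} + a(t,y), \\
    b'_y(t,y) &= -\frac{\gamma_1\theta_1(t)}{(y-\varphi(t))^{\gamma_1+1}} - \frac{\gamma_2\theta_2(t)}{(\psi(t)-y)^{\gamma_2+1}} + a'_y(t,y), \\
    b''_{yy}(t,y) &= \frac{\gamma_1(\gamma_1+1)\theta_1(t)}{(y-\varphi(t))^{\gamma_1+2}} + \frac{\gamma_2(\gamma_2+1)\theta_2(t)}{(\psi(t)-y)^{\gamma_2+2}} + a''_{yy}(t,y).
\end{align*}
Since $\theta_1,\theta_2$ are continuous on the compact interval $[0,T]$ they are bounded by Assumption \ref{assum: b}(B2), and since $a,a'_y,a''_{yy}$ are continuous on $[0,T]\times\mathbb R$ (Assumption \ref{assum: b}(B4)) they are bounded on the compact set $\overline{\mathcal D}$. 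Evaluating at $y=Y(t)$ and using the sandwich property \eqref{eq: sandwich property} so that $(t,Y(t))\in\mathcal D\subset\overline{\mathcal D}$, I obtain a deterministic constant $C$ such that for every $t$,
\[
    |b(t,Y(t))| + |b'_y(t,Y(t))| + |b''_{yy}(t,Y(t))| \le C\left(1 + \frac{1}{(Y(t)-\varphi(t))^{\gamma_1+2}} + \frac{1}{(\psi(t)-Y(t))^{\gamma_2+2}}\right),
\]
where I have absorbed the lower-order powers into the highest ones (legitimate because the distances are bounded above by $\max_t(\psi(t)-\varphi(t))<\infty$, so smaller reciprocal powers are dominated by larger ones up to a constant).

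Next I would define
\[
    \xi := C\left(1 + \sup_{t\in[0,T]}\frac{1}{(Y(t)-\varphi(t))^{\gamma_1+2}} + \sup_{t\in[0,T]}\frac{1}{(\psi(t)-Y(t))^{\gamma_2+2}}\right) + 1,
\]
where the extra $+1$ guarantees the strict inequality demanded in the statement. The pointwise bound above then gives $|b(t,Y(t))|+|b'_y(t,Y(t))|+|b''_{yy}(t,Y(t))| < \xi$ for all $t$, and the moment claim $\mathbb E[\xi^p]<\infty$ for every $p\ge 1$ follows directly from the two moment estimates in Theorem \ref{th: main property of Y} applied with $r=p(\gamma_1+2)$ and $r=p(\gamma_2+2)$ together with the elementary inequality $(a+b+c)^p\le 3^{p-1}(a^p+b^p+c^p)$. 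The final supremum-moment conclusion is then immediate, since the supremum over $t$ of each of the three terms is dominated by $\xi$, which has moments of all orders.

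The steps here are essentially routine; the only point requiring care is verifying that all explosive contributions are genuinely captured by the two reciprocal-distance suprema of Theorem \ref{th: main property of Y}, i.e. that nothing in $b$, $b'_y$ or $b''_{yy}$ blows up other than through powers of $(Y-\varphi)$ and $(\psi-Y)$. This is where Assumption \ref{assum: b}(B4) (the continuity of $a''_{yy}$, not merely of $a$) is indispensable, and I expect the main obstacle, if any, to be the bookkeeping of collapsing the various powers $\gamma_i, \gamma_i+1, \gamma_i+2$ into a single controlling power while respecting the boundedness of the distances.
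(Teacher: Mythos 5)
Your proposal is correct and follows essentially the same route as the paper: both bound the explosive parts of $b$, $b'_y$, $b''_{yy}$ by reciprocal powers of $Y(t)-\varphi(t)$ and $\psi(t)-Y(t)$, handle the $\theta_i$- and $a$-terms by compactness, and then invoke Theorem \ref{th: main property of Y} to get all moments of the resulting $\xi$; the only cosmetic difference is that you use the ``in particular'' moment conclusion of that theorem (with $r=p(\gamma_1+2)$, $r=p(\gamma_2+2)$), while the paper plugs in the pathwise bound $Y(t)-\varphi(t)\ge C_Y(1+\Lambda)^{-\beta}$ and appeals to the moments of $\Lambda$ in \eqref{eq: Lambda moments}. (Your sign on the second term of $b''_{yy}$ should be negative, but this is immaterial since you immediately pass to absolute values.)
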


\begin{proof}
    Fix $\lambda\in(0,H)$ and take the corresponding $\Lambda > 0$ from \eqref{eq: Z Holder} and $C_Y, \beta > 0$ be from Theorem \ref{th: main property of Y}. Then
    \begin{align*}
        |b(t,Y(t))| &= \frac{|\theta_1(t)|}{(Y(t) - \varphi(t))^{\gamma_1}} + \frac{|\theta_2(t)|}{(\psi(t) - Y(t))^{\gamma_2}} + |a(t, Y(t))|
        \\
        & \le \frac{\sup_{t\in[0,T]}|\theta_1(t)|(1+\Lambda)^{\beta{\gamma_1}}}{C_Y^{\gamma_1}} + \frac{\sup_{t\in[0,T]}|\theta_2(t)|(1+\Lambda)^{\beta\gamma_2}}{C_Y^{\gamma_2}} + \sup_{(t,y) \in \mathcal D}|a(t,y)|
        \\
        &:= \xi_0,
    \end{align*}
    \begin{align*}
        |b'_y(t,Y(t))| &= \frac{\gamma_1|\theta_1(t)|}{(Y(t) - \varphi(t))^{\gamma_1+1}} + \frac{\gamma_2|\theta_2(t)|}{(\psi(t) - Y(t))^{\gamma_2+1}} + |a'_y(t, Y(t))|
        \\
        & \le \frac{\gamma_1\sup_{t\in[0,T]}|\theta_1(t)|(1+\Lambda)^{\beta(\gamma_1+1)}}{C_Y^{\gamma_1+1}} 
        \\
        &\quad + \frac{\gamma_2\sup_{t\in[0,T]}|\theta_2(t)|(1+\Lambda)^{\beta(\gamma_2 + 1)}}{C_Y^{\gamma_2+1}} 
        \\
        &\quad + \sup_{(t,y) \in \mathcal D}|a'_y(t,y)|
        \\
        &:= \xi_1,
    \end{align*}
    \begin{align*}
        |b''_{yy}(t,Y(t))| &= \frac{\gamma_1(\gamma_1+1)|\theta_1(t)|}{(Y(t) - \varphi(t))^{\gamma_1+2}} + \frac{\gamma_2(\gamma_2+1)|\theta_2(t)|}{(\psi(t) - Y(t))^{\gamma_2+2}} + |a''_{yy}(t, Y(t))|
        \\
        & \le \frac{\gamma_1(\gamma_1+1)\sup_{t\in[0,T]}|\theta_1(t)|(1+\Lambda)^{\beta(\gamma_1+2)}}{C_Y^{\gamma_1+2}} 
        \\
        &\quad+ \frac{\gamma_2(\gamma_2+1)\sup_{t\in[0,T]}|\theta_2(t)|(1+\Lambda)^{\beta(\gamma_2 + 2)}}{C_Y^{\gamma_2+2}} 
        \\
        &\quad+ \sup_{(t,y) \in \mathcal D}|a''_{yy}(t,y)|
        \\
        &:= \xi_2.
    \end{align*}
    Note that $\xi_0$, $\xi_1$ and $\xi_2$ have moments of all orders by the properties of $\Lambda$, see \eqref{eq: Lambda moments}, and hence, putting
    \[
        \xi := \xi_0 + \xi_1 + \xi_2,
    \]
    we obtain the required result.
\end{proof}

As noted in Theorem \ref{th: 1 order Malliavin differentiability}, $Y(t)\in\mathbb D^{1,2}$ for each $t\ge 0$. In fact, Proposition \ref{prop: bounds for derivatives} together with the shape \eqref{eq: Malliavin derivative of Y(t)} of the derivative allow to establish a more general result.

\begin{proposition}\label{prop: D1p}
    For any $t\in [0,T]$ and $p > 1$, $Y(t) \in \mathbb D^{1,p}$.
\end{proposition}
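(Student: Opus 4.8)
The plan is to exploit the first-order differentiability already granted by Theorem \ref{th: 1 order Malliavin differentiability} and merely \emph{upgrade} the integrability of the derivative from $L^2$ to $L^p$. Since $Y(t)\in\mathbb D^{1,2}$ is known, it suffices to verify the two $L^p$-integrability conditions underlying membership in $\mathbb D^{1,p}$: that $Y(t)\in L^p(\Omega)$ and that the Malliavin norm $\|DY(t)\|_{L^2[0,T]}$ lies in $L^p(\Omega)$.

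The first requirement is immediate from the sandwiching \eqref{eq: sandwich property}: since $\varphi(t)<Y(t)<\psi(t)$ with $\varphi,\psi$ continuous and hence bounded on $[0,T]$, the variable $Y(t)$ is deterministically bounded and trivially lies in $L^p$. For the second, I would start from the explicit expression \eqref{eq: Malliavin derivative of Y(t)}. The key preliminary observation is that the exponential factor is bounded by a deterministic constant: by Remark \ref{rem: exp bounded} one has $b'_y(v,Y(v))\le M:=\max_{(t,y)\in\overline{\mathcal D}}a'_y(t,y)<\infty$, whence $\exp\{\int_u^t b'_y(v,Y(v))dv\}\le e^{M^+T}=:C$ with $M^+=\max(M,0)$. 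Combining this with the bound $|b'_y(u,Y(u))|<\xi$ from Proposition \ref{prop: bounds for derivatives} gives the pointwise estimate
\[
    |D_sY(t)|\le |\mathcal K(t,s)| + C\xi\int_s^t|\mathcal K(u,s)|\,du.
\]

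Integrating in $s$ and using $(a+b)^2\le 2a^2+2b^2$ together with the Cauchy--Schwarz bound $\left(\int_s^t|\mathcal K(u,s)|du\right)^2\le T\int_0^T\mathcal K^2(u,s)du$, I expect to obtain
\[
    \|DY(t)\|_{L^2[0,T]}^2=\int_0^T|D_sY(t)|^2\,ds\le A + B\,\xi^2,
\]
where $A:=2\sup_{t\in[0,T]}\int_0^T\mathcal K^2(t,s)ds$ and $B:=2C^2T\int_0^T\int_0^T\mathcal K^2(u,s)\,du\,ds$ are finite deterministic constants by \eqref{eq: sup of K} and Assumption \ref{assum: K}(K1). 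Raising to the power $p/2$ and taking expectations then yields $\mathbb E[\|DY(t)\|_{L^2[0,T]}^p]\le\mathbb E[(A+B\xi^2)^{p/2}]<\infty$, since $\xi$ has moments of all orders by Proposition \ref{prop: bounds for derivatives}.

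The main obstacle is the final implication: having $Y(t)\in\mathbb D^{1,2}$ with an $L^p$-integrable derivative norm does not \emph{tautologically} place $Y(t)$ in $\mathbb D^{1,p}$, because the latter space is defined by completion of smooth variables under $\|\cdot\|_{1,p}$. I would close this gap with the standard weak-compactness criterion (the $p>1$ analogue of \cite[Lemma 1.2.3]{Nualart_2006}): taking a sequence of smooth cylindrical variables $Y_n$ approximating $Y(t)$ that arises in the construction underlying Theorem \ref{th: 1 order Malliavin differentiability}, the uniform estimate above makes $\{DY_n\}$ bounded in the reflexive space $L^p(\Omega;L^2[0,T])$, so it admits a weakly convergent subsequence whose limit, by closedness of the operator $D$, must coincide with $DY(t)$; this forces $Y(t)\in\mathbb D^{1,p}$. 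I anticipate that this verification, rather than the elementary kernel estimates, will demand the most care, as one must ensure the approximating sequence is genuinely bounded in the $\|\cdot\|_{1,p}$-norm and not merely in $\|\cdot\|_{1,2}$.
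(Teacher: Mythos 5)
Your preparatory estimates coincide with the paper's: membership $Y(t)\in L^p(\Omega)$ from the sandwich property \eqref{eq: sandwich property}, the pointwise bound
\[
    |D_sY(t)| \le |\mathcal K(t,s)| + C\,\xi\int_s^t |\mathcal K(u,s)|\,du
\]
obtained from \eqref{eq: Malliavin derivative of Y(t)}, Remark \ref{rem: exp bounded} and Proposition \ref{prop: bounds for derivatives}, and then the finiteness of $\mathbb E\bigl[\bigl(\int_0^T (D_sY(t))^2ds\bigr)^{p/2}\bigr]$ via \eqref{eq: sup of K}, Assumption \ref{assum: K}(K1) and the fact that $\xi$ has moments of all orders. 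Up to this point your argument is exactly the paper's proof.

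The divergence is in the closing step, and there your proposal has a genuine gap, albeit one you flag yourself. You are right that ``$Y(t)\in\mathbb D^{1,2}$ with $L^p$-integrable derivative norm'' is not tautologically the same as $Y(t)\in\mathbb D^{1,p}$; but the paper closes this step simply by invoking Lemma \ref{lemma: moments and Malliavin differentiability} from its Appendix (a simplified form of \cite[Proposition 1.5.5]{Nualart_2006}), whose statement is precisely this implication. Your substitute -- weak compactness applied to ``the approximating sequence arising in the construction underlying Theorem \ref{th: 1 order Malliavin differentiability}'' -- does not work as written: your uniform estimate controls $DY(t)$ itself, not the derivatives $DY_n$ of any approximating sequence. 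A sequence of smooth cylindrical variables converging to $Y(t)$ in the $\|\cdot\|_{1,2}$-norm need not be bounded in the $\|\cdot\|_{1,p}$-norm, and the approximants actually used in \cite[Section 3]{DiNunno_Mishura_Yurchenko-Tytarenko_2022} are solutions of modified equations, each of which would require its own uniform-in-$n$ analogue of Proposition \ref{prop: bounds for derivatives} (i.e.\ redoing the sandwich estimates for a whole family of processes). So the verification you defer as ``demanding the most care'' is not a routine check -- it is the entire mathematical content of the step, and as stated it is missing. The repair is immediate, however: delete the weak-compactness paragraph and cite Lemma \ref{lemma: moments and Malliavin differentiability}, after which your proof becomes identical to the paper's.
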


\begin{proof}
    Note that, by \eqref{eq: sandwich property}, $\mathbb E[|Y(t)|^p]<\infty$ for any $p>1$, so, by Lemma \ref{lemma: moments and Malliavin differentiability} from the Appendix, it is sufficient to prove that
    \[
        \mathbb E\left[\left( \int_0^T \left(D_s Y(t)\right)^2 ds \right)^{\frac p 2}\right] < \infty
    \]
    for any $p>1$. Note that, by Remark \ref{rem: exp bounded},
    \[
        \exp\left\{\int_s^t b'_y(v,Y(v))dv\right\} < \exp\left\{cT\right\},
    \]
    where 
    \[
        c := \max_{(t,y)\in\overline{\mathcal D}} a'_y(t,y)
    \]
    and, by Proposition \ref{prop: bounds for derivatives}, there exists a random variable $\xi$ having all moments such that
    \[
        \sup_{s\in[0,T]}|b'_y(s, Y(s))| \le \xi.
    \]\
    Hence
    \begin{equation}\label{proofeq: pathwise bound on the derivative}
    \begin{aligned}
        |D_s Y(t)| & \le |\mathcal K(t,s)| + \int_s^t |\mathcal K(u,s)| |b'_y(u, Y(u))| \exp\left\{\int_u^t b'_y(v,Y(v))dv\right\}du
        \\
        & \le |\mathcal K(t,s)| + \xi \exp\left\{cT\right\} \int_s^t |\mathcal K(u,s)| du.
    \end{aligned}    
    \end{equation}
    By Assumption \ref{assum: K} and Remark \ref{rem: K},
    \[
        \left(\int_0^T \mathcal K^2(t,s)ds\right)^\frac{p}{2} < \infty,
    \]
    therefore
    \begin{equation}\label{proofeq: bound for norm of D1}
    \begin{aligned}
        \mathbb E&\left[\left( \int_0^T \left(D_s Y(t)\right)^2 ds \right)^{\frac p 2}\right]
        \\
        & \le C \left(\int_0^T \mathcal K^2(t,s)ds\right)^\frac{p}{2}
        \\
        &\quad +C\mathbb E \left[\left( \int_0^T \int_0^t \mathcal K^2(u,s) (b'_y(u, Y(u)))^2 \exp\left\{2\int_u^t b'_y(v,Y(v))dv\right\}du ds \right)^{\frac p 2}\right]
        \\
        &\le C \left(\int_0^T \mathcal K^2(t,s)ds\right)^\frac{p}{2} + C \mathbb E \left[\xi^p\right]\exp\left\{pcT\right\} \left( \int_0^T \int_0^t \mathcal K^2(u,s) du ds \right)^{\frac p 2}
        \\
        &< \infty,
    \end{aligned}    
    \end{equation}
    
    which ends the proof.
\end{proof}

Our next goal is to establish the Malliavin chain rule for the random variables $b'_y (t, Y(t))$ and $\exp\left\{ \int_u^t b'_y(v,Y(v))dv \right\}$.

\begin{proposition}\label{prop: auxiliary derivative computations}
    For any $0 \le u \le t \le T$ and $p>1$, 
    \begin{itemize}
        \item[1)] $b'_y (t, Y(t)) \in \mathbb D^{1,p}$ with
        \begin{equation}\label{eq: Malliavin derivative of b'}
            D_s \left[ b'_y (t, Y(t)) \right] = b''_{yy} (t, Y(t)) D_sY(t),
        \end{equation}
        
        \item[2)] $\exp\left\{ \int_u^t b'_y(v,Y(v))dv \right\} \in \mathbb D^{1,p}$
        with
        \begin{equation}\label{eq: Malliavin derivative of exp(b')}
            D_s \left[\exp\left\{ \int_u^t b'_y(v,Y(v))dv \right\}\right] = \exp\left\{ \int_u^t b'_y(v,Y(v))dv \right\} \int_u^t b''_{yy}(v,Y(v)) D_s Y(v) dv.
        \end{equation}
    \end{itemize} 
\end{proposition}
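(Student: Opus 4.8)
The plan is to prove both statements by approximation, since the chain rule cannot be applied directly: the functions $y \mapsto b'_y(t,y)$ and $y \mapsto b''_{yy}(t,y)$ blow up near the boundaries $\varphi(t)$ and $\psi(t)$, so neither the boundedness hypothesis of \cite[Proposition 1.2.3]{Nualart_2006} nor the global Lipschitz hypothesis of \cite[Proposition 1.2.4]{Nualart_2006} is available. Let me think about how to set this up rigorously.

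I want to start with part 1). Fix $t \in [0,T]$. The idea is to approximate the singular function $y \mapsto b'_y(t,y)$ by globally Lipschitz (or smooth, bounded-derivative) functions obtained by truncation away from the bounds.

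For part 1), the plan is to introduce a family of smooth truncations. For each integer $n$, I would construct a function $g_n(t,\cdot)$ that coincides with $b'_y(t,\cdot)$ on the region where $Y$ is guaranteed to stay (by Theorem~\ref{th: main property of Y}) but is modified to be globally Lipschitz near the boundaries, for instance by freezing $b'_y$ at a level set or by smoothly cutting off the singular terms. To such a $g_n$ the classical chain rule \cite[Proposition 1.2.4]{Nualart_2006} applies, giving $D_s[g_n(t,Y(t))] = (g_n)'_y(t,Y(t)) D_s Y(t)$. Since Theorem~\ref{th: main property of Y} guarantees that $Y(t)$ stays strictly inside a random region bounded away from $\varphi$ and $\psi$ in terms of $(1+\Lambda)^{-\beta}$, for each $\omega$ the truncation is inactive once $n$ is large enough, so $g_n(t,Y(t)) \to b'_y(t,Y(t))$ and $(g_n)'_y(t,Y(t)) = b''_{yy}(t,Y(t))$ eventually. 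The key step is then to upgrade this pathwise convergence to convergence in $\mathbb{D}^{1,p}$, which by Lemma~\ref{lemma: closability} (or the closability of the operator $D$) requires showing that both the random variables $g_n(t,Y(t))$ and their derivatives $(g_n)'_y(t,Y(t))D_s Y(t)$ converge in the appropriate $L^p$ and $L^p(\Omega; L^2([0,T]))$ norms. This convergence follows from Proposition~\ref{prop: bounds for derivatives} combined with dominated convergence: the uniform bound $|b''_{yy}(t,Y(t))| \le \xi$ with $\xi$ having all moments, together with the bound \eqref{proofeq: pathwise bound on the derivative} on $D_s Y(t)$ and $\int_0^T \mathcal K^2(t,s)ds < \infty$, provides an integrable dominating random variable uniformly in $n$.

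For part 2), I would proceed similarly but first need Malliavin differentiability of the integral $\int_u^t b'_y(v,Y(v))dv$. The natural route is to interchange the Malliavin derivative and the Lebesgue integral in $v$: having established in part 1) that each $b'_y(v,Y(v)) \in \mathbb{D}^{1,p}$ with derivative $b''_{yy}(v,Y(v))D_sY(v)$, I would invoke a commutation result (such as \cite[Lemma 1.2.3]{Nualart_2006} on integrals of parametrized random variables, or a Fubini-type argument justified by the uniform moment bounds of Proposition~\ref{prop: bounds for derivatives}) to conclude $\int_u^t b'_y(v,Y(v))dv \in \mathbb{D}^{1,p}$ with $D_s[\int_u^t b'_y(v,Y(v))dv] = \int_u^t b''_{yy}(v,Y(v))D_sY(v)dv$. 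Then the chain rule for $\exp$ applies: since $\exp$ is smooth and, crucially, $\int_u^t b'_y(v,Y(v))dv$ is bounded above by $cT$ (Remark~\ref{rem: exp bounded}) though not below, one must again truncate, this time composing $\exp$ with a cutoff that makes the exponential globally Lipschitz, and pass to the limit. The relevant moment bounds — $\mathbb{E}[\exp\{p\int_u^t b'_y dv\}] \le \exp\{pcT\}$ from the upper bound and $\mathbb{E}[\xi^p]<\infty$ — guarantee the requisite $L^p$ integrability of the limiting derivative $\exp\{\int_u^t b'_y(v,Y(v))dv\}\int_u^t b''_{yy}(v,Y(v))D_sY(v)dv$.

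I expect the main obstacle to be the justification of the limit passage in $\mathbb{D}^{1,p}$ rather than merely pathwise, and in particular the commutation of $D$ with the $dv$-integral in part 2). The upper one-sided boundedness of $b'_y$ (Remark~\ref{rem: exp bounded}) is essential for controlling the exponential, but because $b'_y$ is not bounded below, one cannot naively bound the exponential by a constant in both directions; the argument must lean on Proposition~\ref{prop: bounds for derivatives} to produce $L^p$ dominating bounds. Assembling a single dominating random variable that controls the products $b''_{yy}(v,Y(v))D_sY(v)$ uniformly over the truncation index $n$ and integrably over $(s,v)$ is the technical heart of the proof.
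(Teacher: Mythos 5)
Your proposal is correct and shares the paper's overall skeleton: reduce $\mathbb D^{1,p}$ membership to $\mathbb D^{1,2}$ membership plus moment bounds (Lemma \ref{lemma: moments and Malliavin differentiability} combined with Proposition \ref{prop: bounds for derivatives}), approximate by functions to which the classical chain rule applies, pass to the limit via closedness of $D$ and dominated convergence, and, for part 2), commute $D$ with the $dv$-integral before applying a chain rule to the exponential. Where you genuinely differ is the truncation device in part 1). You cut off in the \emph{state} variable, taking $g_n(t,\cdot)=b'_y(t,\cdot)$ at distance $1/n$ from the bounds and frozen or smoothed outside, relying on Theorem \ref{th: main property of Y} to make the truncation pathwise eventually inactive; the technical price, which you correctly identify, is a dominating bound uniform in $n$ for the truncated quantities (this does work, because freezing at level $\varphi(t)+1/n$ or $\psi(t)-1/n$ only decreases the singular terms when $Y(t)$ is even closer to the bound). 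The paper instead cuts off in the \emph{value} of $b'_y$, setting $f_m(y):=m\phi\left(b'_y(t,y)/m\right)$ with $\phi$ compactly supported: since both singular terms enter $b'_y$ with negative sign, $b'_y(t,y)\to-\infty$ at both bounds, so $f_m$ vanishes identically near them and is automatically $C^1$ with bounded derivative; the domination then needs nothing beyond $|f_m(Y(t))|\le |b'_y(t,Y(t))|$ and boundedness of $\phi'$, with Theorem \ref{th: main property of Y} entering only through the variable $\xi$. Your route is more generic (it never uses the sign structure of $b'_y$); the paper's is tidier at the limit-passage step. Two refinements you should make: (a) with a merely Lipschitz cutoff, \cite[Proposition 1.2.4]{Nualart_2006} yields $D[g_n(t,Y(t))]=G\,DY(t)$ only for \emph{some} bounded random $G$, and identifying $G=(g_n)'_y(t,Y(t))$ requires absolute continuity of the law of $Y(t)$, so the smooth cutoff together with \cite[Proposition 1.2.3]{Nualart_2006} should be your primary construction; (b) for the exponential in part 2) no sequence of truncations or limit passage is needed at all --- since $\int_u^t b'_y(v,Y(v))dv\le cT$ deterministically (Remark \ref{rem: exp bounded}), a single $C^1$ modification of $e^x$ beyond the level $cT$, with globally bounded derivative, suffices; this is an elementary substitute for the paper's appeal to the chain rule of \cite[Proposition 3.4]{DiNunno_Mishura_Yurchenko-Tytarenko_2022}, while your Fubini-type commutation of $D$ with the $dv$-integral is exactly what the paper makes rigorous via closedness of $D$ and Hille's theorem.
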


\begin{proof}
    1) We shall start from proving that $b'_y (t, Y(t)) \in \mathbb D^{1,p}$. Note that $b'_y$ is not a bounded function itself and it does not have bounded derivatives -- hence the classical chain rule from \cite[Section 1.2]{Nualart_2006} cannot be applied here in a straightforward manner. In order to overcome this issue, we will use the approach in the spirit of \cite[Lemma A.1]{OcK1991} or \cite[Proposition 3.4]{DiNunno_Mishura_Yurchenko-Tytarenko_2022}. For the reader's convenience, we divide the proof into steps.

    \textbf{Step 0.} First of all, observe that $b'(t,Y(t)) \in L^2(\Omega)$ as a direct consequence of Proposition \ref{prop: bounds for derivatives}. Also, for any $p>1$,
    \begin{equation*}
    \begin{aligned}
        \mathbb E\left[\left( \int_0^T \left(b''_{yy}(t,Y(t)) D_s Y(t)\right)^2 ds \right)^{\frac p 2}\right] < \infty.
    \end{aligned}    
    \end{equation*}
    Indeed, again by Proposition \ref{prop: bounds for derivatives} together with the proof of Proposition \ref{prop: D1p}, we have
    \begin{align*}
        \mathbb E\left[\left( \int_0^T \left(b''_{yy}(t,Y(t)) D_s Y(t)\right)^2 ds \right)^{\frac p 2}\right] &\le \mathbb E\left[\xi^p \left( \int_0^T \left( D_s Y(t)\right)^2 ds \right)^{\frac p 2}\right]
        \\
        & \le \left(\mathbb E\left[\xi^{2p}\right]\right)^{\frac 1 2} \left(\mathbb E\left[\left( \int_0^T \left( D_s Y(t)\right)^2 ds \right)^{p}\right]\right)^{\frac 1 2}
        \\
        &< \infty.
    \end{align*}
    Therefore, by Lemma \ref{lemma: moments and Malliavin differentiability}, it is sufficient to prove that $b'_y (t, Y(t)) \in \mathbb D^{1,2}$ with \eqref{eq: Malliavin derivative of b'} being the corresponding Malliavin derivative.

    \textbf{Step 1.} Let $\phi \in C^{1}(\mathbb R)$ be a compactly supported function such that $\phi(x) = x$ whenever $|x|\le 1$ and $|\phi(x)| \le |x|$ for all $|x|>1$. Fix $t\in[0,T]$ and, for $m\ge 1$, put 
    \[
        f_m(y) := m\phi\left(\frac{b'_y(t,y)}{m}\right).
    \]
    Observe that
    \[
        f'_m(y) = b''_{yy}(t,y) \phi'\left(\frac{b'_y(t,y)}{m}\right)
    \]
    is bounded. Indeed, let $0 < \varepsilon_m < \psi(t) - \varphi(t)$ be such that
    \[
        -\frac{\gamma_1 \theta_1(t)}{\varepsilon_m^{\gamma_1+1}} + \max_{\varphi(t) \le x \le \psi(t)} a'_y(t,x) < m \inf\supp \phi
    \]
    and
    \[
        -\frac{\gamma_2 \theta_2(t)}{\varepsilon_m^{\gamma_2+1}} + \max_{\varphi(t) \le x \le \psi(t)} a'_y(t,x) < m \inf\supp \phi.
    \]
    Then, 
    \begin{itemize}
        \item if $y\in (\varphi(t), \varphi(t)+\varepsilon_m)$, then
        \begin{align*}
            b'_y(t,y) &= -\frac{\gamma_1 \theta_1(t)}{(y-\varphi(t))^{\gamma_1+1}} - \frac{\gamma_2\theta_2(t)}{(\psi(t) - y)^{\gamma_2+1}} + a'_y(t,y)
            \\
            & \le -\frac{\gamma_1 \theta_1(t)}{\varepsilon_m^{\gamma_1+1}} + \max_{\varphi(t) \le x \le \psi(t)}a'_y(t,x)
            \\
            &< m \inf\supp \phi,
        \end{align*}
        so $\frac{b'_y(t,y)}{m} \notin \supp\phi$, $f_m(y)=0$ and $f'_m(y) = 0$;

        \item if $y\in (\psi(t)-\varepsilon_m, \psi(t))$, then, similarly,
        \begin{align*}
            b'_y(t,y) &= -\frac{\gamma_1 \theta_1(t)}{(y-\varphi(t))^{\gamma_1+1}} - \frac{\gamma_2\theta_2(t)}{(\psi(t) - y)^{\gamma_2+1}} + a'_y(t,y)
            \\
            & \le -\frac{\gamma_2 \theta_2(t)}{\varepsilon_m^{\gamma_2+1}} + \max_{\varphi(t) \le x \le \psi(t)}a'_y(t,x)
            \\
            &< m \inf\supp \phi,
        \end{align*}
        so $\frac{b'_y(t,y)}{m} \notin \supp\phi$, $f_m(y) = 0$ and $f'_m(y) = 0$;

        \item on the compact set $[\varphi(t) +\varepsilon_m, \psi(t) - \varepsilon_m]$, both $f_m$ and its derivative $f'_m$ are continuous and hence bounded.
    \end{itemize}
    Therefore, the function $f_m$ satisfies the conditions of the classical Malliavin chain rule \cite[Proposition 1.2.3]{Nualart_2006}, so $f_m(Y(t)) \in \mathbb D^{1,2}$ and, with probability 1 for a.a. $s\in[0,T]$,
    \[
        D_s f_m(Y(t)) = b''_{yy}(t,Y(t)) \phi'\left(\frac{b'_y(t,Y(t))}{m}\right) D_sY(t).
    \]
    Now it remains to prove that $f_m(Y(t)) \to b'(t, Y(t))$ in $L^2(\Omega)$ and $D f_m(Y(t)) \to b''_{yy} (t, Y(t)) DY(t)$ in $L^2(\Omega\times[0,T])$ as $m\to \infty$ --- then the result will follow immediately from the closedness of the Malliavin derivative operator $D$.

    \textbf{Step 2:} $f_m(Y(t)) \to b'(t, Y(t))$ in $L^2(\Omega)$ as $m\to\infty$. By the definitions of $f_m$ and $\phi$, $f_m(Y(t)) \to b'(t, Y(t))$ a.s., $m\to\infty$. Moreover, with probability 1, $|f_m(Y(t))| \le |b'_y(t,Y(t))| \in L^2(\Omega)$ and hence the required convergence follows from the dominated convergence theorem.

    \textbf{Step 3:} $D f_m(Y(t)) \to b''_{yy} (t, Y(t)) DY(t)$ in $L^2(\Omega\times[0,T])$ as $m\to \infty$. By the definitions of $f_m$ and $\phi$, with probability 1,
    \begin{align*}
        \left(b''_{yy}(t,Y(t)) \phi'\left(\frac{b'_y(t,Y(t))}{m}\right)\right)^2 \int_0^T (D_sY(t))^2ds \to (b''_{yy}(t,Y(t)))^2 \int_0^T (D_sY(t))^2ds 
    \end{align*}
    as $m\to\infty$. Moreover, since $\phi$ has compact support, $\max_{y\in\mathbb R}(\phi'(y))^2 < \infty$, so we can write
    \begin{align*}
        \int_0^T (D_s f_m(Y(t)))^2ds &=\left(b''_{yy}(t,Y(t)) \phi'\left(\frac{b'_y(t,Y(t))}{m}\right)\right)^2 \int_0^T (D_sY(t))^2ds
        \\
        &\le \max_{y\in\mathbb R} (\phi'(y))^2 (b''_{yy}(t,Y(t)))^2 \int_0^T (D_sY(t))^2ds \in L^2(\Omega).
    \end{align*}
    Therefore, by the dominated convergence theorem,
    \begin{align*}
        \mathbb E&\left[ \int_0^T \left(D_s f_m(Y(t)) - b''_{yy} (t, Y(t)) D_sY(t)\right)^2ds \right] \to 0, \quad m \to \infty,
    \end{align*}
    which proves the first claim of the Proposition.

    2) Let us proceed with the second claim and verify that $\exp\left\{ \int_u^t b'_y(v,Y(v))dv \right\} \in \mathbb D^{1,p}$ with \eqref{eq: Malliavin derivative of exp(b')} being the corresponding Malliavin derivative. Note that, since $b'_y$ is bounded from above, $\exp\left\{ \int_u^t b'_y(v,Y(v))dv \right\}$ is also bounded from above and hence is an element of $L^p(\Omega)$ for any $p>1$. Moreover, by Proposition \ref{prop: bounds for derivatives}, boundedness of $\exp\left\{ \int_u^t b'_y(v,Y(v))dv \right\}$ and \eqref{proofeq: pathwise bound on the derivative}, we can write
    \begin{align*}
        \mathbb E&\left[\left(\int_0^T \left(\exp\left\{ \int_u^t b'_y(v,Y(v))dv \right\} \int_u^t b''_{yy}(v,Y(v)) D_s Y(v) dv\right)^2ds\right)^{\frac{p}{2}}\right]
        \\
        &\le C\mathbb E\left[\xi^p \left(\int_0^T  \int_u^t \left(D_s Y(v)\right)^2 dvds\right)^{\frac{p}{2}}\right]
        \\
        & \le C\mathbb E\left[\xi^p \left(\int_0^T  \int_u^t \mathcal K^2(v,s) dvds\right)^{\frac{p}{2}}\right] 
        + C\exp\left\{pcT\right\}\mathbb E\left[\xi^{2p}\right] \left(\int_0^T  \int_u^t   \int_s^v \mathcal K^2(u,s) du dvds\right)^{\frac{p}{2}}
        \\
        &<\infty
    \end{align*}
    and hence it is sufficient to prove that $\exp\left\{ \int_u^t b'_y(v,Y(v))dv \right\} \in \mathbb D^{1,2}$. 

    Since the Malliavin derivative operator $D$ is closed and the expression $\int_u^t b''_{yy}(v,Y(v)) D_s Y(v) dv$ is well-defined by Proposition \ref{prop: bounds for derivatives}, Step 1 of the current proof and Hille's theorem \cite[Theorem 1.2.4]{Veraar_Weis_2016} guarantee that $\int_u^t b'_{y}(v,Y(v)) Y(v) dv \in \mathbb D^{1,2}$ and
    \[
        D_s \int_u^t b'_{y}(v,Y(v)) Y(v) dv = \int_u^t b''_{yy}(v,Y(v)) D_s Y(v) dv.
    \]
    Finally, the function $x \mapsto e^x$ satisfies the conditions of the chain rule from \cite[Proposition 3.4]{DiNunno_Mishura_Yurchenko-Tytarenko_2022} and hence $\exp\left\{ \int_u^t b'_y(v,Y(v))dv \right\} \in \mathbb D^{1,2}$ and \eqref{eq: Malliavin derivative of exp(b')} holds.
\end{proof}

Proposition \ref{prop: auxiliary derivative computations} and Lemma \ref{lemma: product rule} together allow us to deduce the following corollary.

\begin{corollary}\label{cor: product rule for sandwiched processes}
    For any $0\le s < t \le T$ and $p>1$, $b'_y(s, Y(s)) \exp\left\{\int_s^t b'_y(v,Y(v))dv\right\} \in \mathbb D^{1,p}$ and
    \begin{equation}
    \begin{aligned}
        D_u&\left[b'_y(s, Y(s)) \exp\left\{\int_s^t b'_y(v,Y(v))dv\right\}\right] 
        \\
        &= b''_{yy}(s,Y(s))\exp\left\{\int_s^t b'_y(v,Y(v))dv\right\} D_uY(s) 
        \\
        &\quad + b'_y(s, Y(s)) \exp\left\{ \int_s^t b'_y(v,Y(v))dv \right\} \int_s^t b''_{yy}(v,Y(v)) D_u Y(v) dv.
    \end{aligned}
    \end{equation}
\end{corollary}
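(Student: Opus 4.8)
The plan is to realize the random variable in question as a product $FG$ of the two factors whose Malliavin differentiability has just been established, and then to invoke the general product rule of Lemma \ref{lemma: product rule}. Concretely, set $F := b'_y(s, Y(s))$ and $G := \exp\left\{\int_s^t b'_y(v,Y(v))dv\right\}$. Part 1) of Proposition \ref{prop: auxiliary derivative computations} gives $F \in \mathbb{D}^{1,p}$ with $D_u F = b''_{yy}(s,Y(s)) D_u Y(s)$ as in \eqref{eq: Malliavin derivative of b'}, while part 2) gives $G \in \mathbb{D}^{1,p}$ with the derivative \eqref{eq: Malliavin derivative of exp(b')}, both valid for every $p>1$. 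Once the hypotheses of Lemma \ref{lemma: product rule} are verified, the conclusion follows by substituting these two expressions into $D_u(FG) = F\, D_u G + G\, D_u F$, which reproduces the two-term formula in the statement.

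The only genuine work is checking the integrability conditions demanded by the product rule, and here the structure of the two factors makes this routine. Because $b'_y$ is bounded from above (Remark \ref{rem: exp bounded}), the factor $G$ satisfies $0 < G \le \exp\left\{cT\right\}$ pointwise, so it contributes only a deterministic bound wherever it appears and lies in every $L^q(\Omega)$. The factor $F$, together with the occurrences of $b''_{yy}$ in both derivatives, is controlled by the random variable $\xi$ of Proposition \ref{prop: bounds for derivatives}, which has moments of all orders. Combining these facts with the pathwise bound \eqref{proofeq: pathwise bound on the derivative} on $D_u Y$ and the $L^p$-estimates obtained in the proofs of Propositions \ref{prop: D1p} and \ref{prop: auxiliary derivative computations}, a single application of Hölder's inequality shows that both $FG$ and the candidate derivative $F\, D_u G + G\, D_u F$ lie in the $L^p$-spaces required by Lemma \ref{lemma: product rule}. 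Crucially, since $F$ and $G$ belong to $\mathbb{D}^{1,p}$ for \emph{every} $p>1$, we have complete freedom to choose conjugate exponents large enough to satisfy whatever Hölder-type constraint the product rule imposes.

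The main obstacle is therefore not conceptual but a matter of matching hypotheses: one must confirm that the generic product rule of the appendix is stated under integrability assumptions that our two factors actually meet, rather than under the boundedness or global Lipschitz conditions of the classical rules in \cite{Nualart_2006}, which neither $F$ nor the integrand of $G$ satisfies. This is precisely why Lemma \ref{lemma: product rule} had to be proved separately. Granting it, the corollary is immediate, and I would close by writing out $D_u(FG)$ explicitly, inserting \eqref{eq: Malliavin derivative of b'} for $D_u F$ and \eqref{eq: Malliavin derivative of exp(b')} for $D_u G$.
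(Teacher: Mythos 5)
Your proposal is correct and follows the paper's own argument essentially verbatim: the paper likewise writes the random variable as the product $X_1 X_2$ with $X_1 = b'_y(s,Y(s))$ and $X_2 = \exp\left\{\int_s^t b'_y(v,Y(v))dv\right\}$, invokes Proposition \ref{prop: auxiliary derivative computations} together with the generalized product rule of Lemma \ref{lemma: product rule}, and checks the required $L^p$-integrability conditions via Proposition \ref{prop: bounds for derivatives} and the estimates from the proof of Proposition \ref{prop: D1p}. Your spelled-out verification (boundedness of the exponential factor, control of $b'_y$ and $b''_{yy}$ by $\xi$, and H\"older's inequality) is exactly what the paper compresses into ``can be checked in a straightforward manner.''
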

\begin{proof}
    For fixed $0\le s < t \le T$, denote
    \[
        X_1 := b'_y(s, Y(s)), \quad X_2 := \exp\left\{\int_s^t b'_y(v,Y(v))dv\right\}.
    \]
    By Proposition \ref{prop: auxiliary derivative computations} and Lemma \ref{lemma: product rule} from the Appendix, it is sufficient to check that for all $p\ge 2$
    \begin{itemize}
        \item[(i)] the product $X_1 X_2 \in L^p(\Omega)$,
        \item[(ii)] $\mathbb E\left[ \left( \int_0^T (X_2 D_u X_1)^2 \right)^{\frac p 2} \right] < \infty$ and 
        \item[(iii)] $\mathbb E\left[ \left( \int_0^T (X_1 D_u X_2)^2 \right)^{\frac p 2} \right] < \infty$.
    \end{itemize}
    All conditions (i)--(iii) can be checked in a straightforward manner using Proposition \ref{prop: bounds for derivatives} and the arguments similar to the proof of Proposition \ref{prop: D1p}.
\end{proof}

We are now ready to formulate the main result of this section.

\begin{theorem}
    For any $t\in[0,T]$ and $p\ge 2$,
    \begin{itemize}
        \item[1)] $Y(t) \in \mathbb D^{2,p}$,

        \item[2)] with probability 1 and for a.a. $r,s\in[0,T]$,
        \begin{equation}\label{eq: second Malliavin derivative}
        \begin{aligned}
            D_r D_s Y(t) &= \int_s^t  \mathcal K(u,s) F_1(t,u)  \left(\int_u^t b''_{yy}(v, Y(v)) D_r Y(v)dv\right)du 
            \\
            & \qquad + \int_s^t \mathcal K(u,s) F_2(t,u) D_r Y(u) du,
        \end{aligned}
        \end{equation}
        where 
        \begin{equation*}
        \begin{aligned}
            F_1(t,u) &:= b'_{y}(u, Y(u)) \exp\left\{ \int_u^t b'_y(v, Y(v))dv \right\}, 
            \\
            F_2(t,u) &:= b''_{yy}(u, Y(u)) \exp\left\{ \int_u^t b'_y(v, Y(v))dv\right\}.
        \end{aligned}    
        \end{equation*}
    \end{itemize}
\end{theorem}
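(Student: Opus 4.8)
The plan is to differentiate the first-order representation \eqref{eq: Malliavin derivative of Y(t)} one more time in the Malliavin sense. Writing the integrand as $F_1(t,u) = b'_y(u,Y(u)) \exp\left\{\int_u^t b'_y(v,Y(v))dv\right\}$, Theorem \ref{th: 1 order Malliavin differentiability} reads $D_s Y(t) = \mathcal K(t,s) + \int_s^t \mathcal K(u,s) F_1(t,u) du$. Since $\mathcal K(t,s)$ is deterministic, its Malliavin derivative vanishes, so formally $D_r D_s Y(t) = \int_s^t \mathcal K(u,s) D_r F_1(t,u) du$, and substituting the expression for $D_r F_1(t,u)$ supplied by Corollary \ref{cor: product rule for sandwiched processes} (which gives exactly $D_r F_1(t,u) = F_2(t,u) D_r Y(u) + F_1(t,u) \int_u^t b''_{yy}(v,Y(v)) D_r Y(v) dv$) produces precisely the claimed formula \eqref{eq: second Malliavin derivative}. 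All the substance lies in justifying this interchange of $D_r$ with the $du$-integral and in verifying the $\mathbb D^{2,p}$ integrability.

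First I would fix $t$ and, for a.a.\ $s$, regard $u \mapsto \mathcal K(u,s) F_1(t,u)$ as a Bochner-integrable map from $[s,t]$ into $\mathbb D^{1,p}$. By Corollary \ref{cor: product rule for sandwiched processes}, $F_1(t,u) \in \mathbb D^{1,p}$ for every $u$, and using Proposition \ref{prop: bounds for derivatives} to bound $|b'_y|, |b''_{yy}| \le \xi$, Remark \ref{rem: exp bounded} to bound the exponential by $e^{cT}$, and the pathwise estimate \eqref{proofeq: pathwise bound on the derivative} on $D_\cdot Y$, I would show that both $\int_s^t \|\mathcal K(u,s) F_1(t,u)\|_{L^p(\Omega)} du$ and $\int_s^t \|\mathcal K(u,s) D_\cdot F_1(t,u)\|_{L^p(\Omega;H)} du$ are finite. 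With these two Bochner-integrability conditions in hand, the closedness of $D$ together with Hille's theorem \cite[Theorem 1.2.4]{Veraar_Weis_2016} yields that $D_s Y(t) \in \mathbb D^{1,p}$ with $D_r D_s Y(t) = \int_s^t \mathcal K(u,s) D_r F_1(t,u) du$; inserting the Corollary's formula then gives \eqref{eq: second Malliavin derivative}.

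To upgrade this to the global statement $Y(t) \in \mathbb D^{2,p}$, I would produce a pathwise bound on the candidate second derivative. Using $|b''_{yy}| \le \xi$, the bound $\exp\left\{\int_u^t b'_y(v,Y(v))dv\right\} \le e^{cT}$, and $|D_r Y(v)| \le |\mathcal K(v,r)| + \xi e^{cT}\int_r^v |\mathcal K(w,r)|dw$ from \eqref{proofeq: pathwise bound on the derivative}, I obtain a bound of the form $|D_r D_s Y(t)| \le C\xi^2 e^{2cT}\,\Xi(r,s)$, where $\Xi(r,s)$ collects the (deterministic) iterated kernel integrals in the variables $r$ and $s$. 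Squaring, integrating in $(r,s)$ over $[0,T]^2$, raising to the power $p/2$ and taking expectation, the square-integrability of $\mathcal K$ (Assumption \ref{assum: K}(K1) and Remark \ref{rem: K}) together with the finiteness of all moments of $\xi$ (Proposition \ref{prop: bounds for derivatives}) yields $\mathbb E\left[\left(\int_0^T\int_0^T (D_r D_s Y(t))^2 dr ds\right)^{p/2}\right] < \infty$. Since $Y(t) \in \mathbb D^{1,p}$ is already known from Proposition \ref{prop: D1p} and the iterated derivative has finite $L^p(\Omega; H^{\otimes 2})$ norm, the characterization in Lemma \ref{lemma: moments and Malliavin differentiability} then gives $Y(t) \in \mathbb D^{2,p}$.

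The main obstacle I anticipate is the rigorous justification of differentiating under the integral sign, that is, the application of Hille's theorem in the $\mathbb D^{1,p}$-valued setting, which demands the careful verification of Bochner integrability rather than a mere pointwise differentiation. In particular, one must control the inner integral $\int_u^t b''_{yy}(v,Y(v)) D_r Y(v)dv$ appearing in $D_r F_1(t,u)$ uniformly enough that the double-integral moment estimate closes. The interplay between the singularity of $b''_{yy}$ near the boundaries $\varphi,\psi$ (tamed only through the random bound $\xi$, which has moments of all orders) and the repeated integration against the possibly singular kernel $\mathcal K$ is where the estimates must be organized most carefully; everything else is bookkeeping resting on the already-established Proposition \ref{prop: bounds for derivatives} and Corollary \ref{cor: product rule for sandwiched processes}.
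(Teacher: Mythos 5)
Your proposal is correct and follows essentially the same route as the paper: differentiate the first-order representation \eqref{eq: Malliavin derivative of Y(t)} using $D_r\mathcal K(t,s)=0$, justify the interchange of $D_r$ with the $du$-integral via closedness of $D$ and Hille's theorem after verifying Bochner integrability of the $\mathbb D^{1,p}$-valued integrand $u\mapsto \mathcal K(u,s)F_1(t,u)$, and insert the expression for $D_rF_1(t,u)$ from Corollary \ref{cor: product rule for sandwiched processes}, with all estimates resting on Proposition \ref{prop: bounds for derivatives} and \eqref{proofeq: pathwise bound on the derivative}. The only cosmetic difference is your final upgrade to $Y(t)\in\mathbb D^{2,p}$ via Lemma \ref{lemma: moments and Malliavin differentiability} (which, as stated, is first-order, so you would implicitly invoke its higher-order analogue from \cite[Proposition 1.5.5]{Nualart_2006}), whereas the paper folds this membership into the same Hille-type verification.
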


\begin{proof}
    Our goal is to prove that $Y(t) \in \mathbb D^{2,p}$ and
    \begin{align*}
        D_rD_sY(t) &= \int_s^t \mathcal K(u,s) D_r\left[ b'_y(u, Y(u)) \exp\left\{\int_u^t b'_y(v, Y(v))dv\right\} \right]du
        \\
        & = \int_s^t \mathcal K(u,s) D_r\left[ F_1(t,u) \right]du
    \end{align*}
    since, in such case, \eqref{eq: second Malliavin derivative} follows immediately from Corollary \ref{cor: product rule for sandwiched processes}. Recall that
    \begin{equation*}
    \begin{aligned}
        D_sY(t) = \mathcal K(t,s) + \int_s^t \mathcal K(u,s) F_1(t,u)du.
    \end{aligned}    
    \end{equation*}
    Clearly, for any $0\le r , s < t\le T$, 
    \[
        D_r \mathcal K(t,s) = 0,
    \]
    so, by closedness of $D$ and Hille's theorem \cite[Theorem 1.2.4]{Veraar_Weis_2016}, it is enough to show that
    \begin{itemize}
        \item[(i)] for a.a. $0\le s \le u < t\le T$, $\mathcal K(u,s) F_1(t,u) \in \mathbb D^{1,p}$ and

        \item[(ii)] for a.a. $0\le s < t\le T$, 
        \begin{equation*}
        \begin{aligned}
            \int_0^T &\left(\mathbb E\left[\left(\int_0^T (D_r[\mathcal K(u,s) F_1(t,u)])^2 dr\right)^{\frac{p}{2}}\right]\right)^{\frac{1}{p}}du 
            \\
            &= \int_0^T \mathcal K(u,s) \left(\mathbb E\left[\left(\int_0^T (D_r[ F_1(t,u)])^2 dr\right)^{\frac{p}{2}}\right]\right)^{\frac{1}{p}}du
            \\
            &< \infty.
        \end{aligned}
        \end{equation*}
    \end{itemize}
    Item (i) above follows immediately from Corollary \ref{cor: product rule for sandwiched processes}. As for item (ii), observe that, by Proposition \ref{prop: bounds for derivatives}, \eqref{proofeq: pathwise bound on the derivative} as well as the boundedness of $\exp\left\{\int_u^t b'_y(v,Y(v))dv\right\}$, we have
    \begin{align*}
        (D_r[ F_1(t,u) ])^2 &\le C\bigg( (b''_{yy}(u,Y(u)))^2\exp\left\{2\int_u^t b'_y(v,Y(v))dv\right\} (D_rY(u))^2 
        \\
        &\quad + (b'_y(u, Y(u)))^2 \exp\left\{ 2\int_u^t b'_y(v,Y(v))dv \right\} \int_u^t (b''_{yy}(v,Y(v)) D_r Y(v))^2 dv\bigg)
        \\
        & \le C \left(\xi^2 (D_rY(u))^2 + \xi^4 \int_u^t (D_rY(v))^2 dv \right)
        \\
        &\le C\xi^2 \left(\mathcal K^2(u,r) + \int_r^u \mathcal K^2(z,r) dz\right) 
        \\
        &\quad + C  \xi^4 \left(\int_u^t \mathcal K^2(v,r)dv + \int_u^t\int_r^v \mathcal K^2(z,r) dzdv\right).
    \end{align*}
    Hence, for any $p\ge 2$, Remark \ref{rem: K} implies
    \begin{align*}
        \int_0^T (D_r[ F_1(t,u)])^2 dr &\le C\xi^2 \left(\int_0^T\mathcal K^2(u,r)dr + \int_0^T\int_r^u \mathcal K^2(z,r) dzdr\right) 
        \\
        &\quad + C  \xi^4 \left(\int_0^T\int_u^t \mathcal K^2(v,r)dvdr + \int_0^T\int_u^t\int_r^v \mathcal K^2(z,r) dzdvdr\right)
        \\
        & \le C\left(\xi^2+\xi^4\right),
    \end{align*}
    so, since $\xi$ has moments of all orders, (ii) holds, which finalizes the proof.
\end{proof}

Finally, denote $\mathbb L^{2,p} := L^p([0,T];\mathbb D^{2,p})$. We complete the section with the following result.
\begin{corollary}
    For any $p\ge 2$, $Y \in \mathbb L^{2,p}$.
\end{corollary}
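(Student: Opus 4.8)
The plan is to unwind the definition $\mathbb{L}^{2,p} = L^p([0,T];\mathbb{D}^{2,p})$ and reduce the statement to uniform-in-$t$ bounds that have, in effect, already been produced in the proofs above. By definition, $Y\in\mathbb{L}^{2,p}$ amounts to the strong measurability of the map $t\mapsto Y(t)$ with values in $\mathbb{D}^{2,p}$ together with
\[
    \int_0^T \|Y(t)\|_{\mathbb{D}^{2,p}}^p\,dt < \infty,
\]
where $\|Y(t)\|_{\mathbb{D}^{2,p}}^p = \mathbb{E}[|Y(t)|^p] + \mathbb{E}\big[\|DY(t)\|_{L^2([0,T])}^p\big] + \mathbb{E}\big[\|D^2Y(t)\|_{L^2([0,T]^2)}^p\big]$. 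Strong measurability follows from the joint measurability of $(t,\omega)\mapsto Y(t,\omega)$ and of its first and second Malliavin derivatives, so the whole matter comes down to showing that each of these three expectations is bounded by a constant independent of $t\in[0,T]$; since $[0,T]$ has finite length, integration in $t$ then yields the claim.

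First I would dispatch the two lower-order terms. The bound $\mathbb{E}[|Y(t)|^p]\le \big(\max_{t\in[0,T]}|\psi(t)|\vee\max_{t\in[0,T]}|\varphi(t)|\big)^p$ is immediate from the sandwich property \eqref{eq: sandwich property} and the continuity of $\varphi,\psi$, and is therefore uniform in $t$. For the first-derivative term, the chain of estimates \eqref{proofeq: bound for norm of D1} in the proof of Proposition \ref{prop: D1p} already produces a bound involving only $\big(\sup_{t}\int_0^T\mathcal{K}^2(t,s)\,ds\big)^{p/2}$ and $\big(\int_0^T\int_0^T\mathcal{K}^2(u,s)\,du\,ds\big)^{p/2}$, both finite and $t$-independent by Remark \ref{rem: K} and Assumption \ref{assum: K}(K1); this gives the desired uniform bound on $\mathbb{E}\big[\|DY(t)\|_{L^2([0,T])}^p\big]$.

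The second-derivative term is handled with the estimate already isolated in the proof of the preceding theorem, namely $\int_0^T (D_r[F_1(t,u)])^2\,dr \le C(\xi^2+\xi^4)$ uniformly in $u$ and $t$. Applying the Cauchy--Schwarz inequality to $D_rD_sY(t)=\int_s^t\mathcal{K}(u,s)D_r[F_1(t,u)]\,du$ gives
\[
    \int_0^T\!\!\int_0^T (D_rD_sY(t))^2\,dr\,ds \le \int_0^T\Big(\int_s^t\mathcal{K}^2(u,s)\,du\Big)\Big(\int_s^t\!\!\int_0^T (D_r[F_1(t,u)])^2\,dr\,du\Big)ds,
\]
and feeding in the uniform bound together with Assumption \ref{assum: K}(K1) yields $\int_0^T\int_0^T (D_rD_sY(t))^2\,dr\,ds \le C(\xi^2+\xi^4)$ with $C$ depending only on $T$ and the kernel. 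Taking $\mathbb{E}[(\cdot)^{p/2}]$ and using that $\xi$ has finite moments of all orders bounds $\mathbb{E}\big[\|D^2Y(t)\|_{L^2([0,T]^2)}^p\big]$ uniformly in $t$.

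Combining the three uniform bounds and integrating over the finite interval $[0,T]$ completes the argument. The only genuinely delicate point --- and thus the main thing to watch --- is that every constant appearing here be independent of $t$: this is exactly what Remark \ref{rem: K} and Assumption \ref{assum: K}(K1) guarantee, in contrast to the merely pointwise-in-$t$ finiteness one would get from the square-integrability of $\mathcal{K}$ alone.
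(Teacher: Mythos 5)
Your proposal is correct and follows essentially the same route as the paper: the same reduction to the three terms of the $\|\cdot\|_{2,p}$-norm integrated in $t$, with the sandwich property handling $\mathbb{E}[|Y(t)|^p]$, the estimate \eqref{proofeq: bound for norm of D1} handling the first-derivative term, and the second-derivative term controlled via Proposition \ref{prop: bounds for derivatives} and the uniform bound on $\int_0^T (D_r[F_1(t,u)])^2\,dr$. The only difference is one of completeness: where the paper merely asserts that a bound ``similar to \eqref{proofeq: bound for norm of D1}'' holds for the second derivative, you carry it out explicitly via Cauchy--Schwarz, which is exactly the intended argument.
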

\begin{proof}
    By the definition of the $\|\cdot\|_{2,p}$-norm in \eqref{app: Dkp norm} from Appendix \ref{appendix}, it is sufficient to check that
    \begin{equation}\label{proofeq: L1}
        \int_0^T \mathbb E[|Y(t)|^p] < \infty,
    \end{equation}
    \begin{equation}\label{proofeq: L2}
        \int_0^T \mathbb E\left[ \left(\int_0^T (D_s Y(t))^2ds\right)^{\frac{p}{2}} \right]dt < \infty
    \end{equation}
    and
    \begin{equation}\label{proofeq: L3}
        \int_0^T \mathbb E\left[ \left(\int_0^T \int_0^T(D_rD_s Y(t))^2dsdr\right)^{\frac{p}{2}} \right]dt < \infty.
    \end{equation}
    By \eqref{eq: sandwich property}, \eqref{proofeq: L1} holds automatically. Next, \eqref{proofeq: L2} can be easily deduced from \eqref{proofeq: bound for norm of D1}. Finally, using Proposition \eqref{prop: bounds for derivatives} and the boundedness of $\exp\left\{\int_u^t b'_y(v, Y(v))dv\right\}$, it is easy to prove a bound similar to \eqref{proofeq: bound for norm of D1} for
    \[
        \mathbb E\left[ \left(\int_0^T \int_0^T(D_rD_s Y(t))^2dsdr\right)^{\frac{p}{2}} \right]
    \]
    which implies \eqref{proofeq: L3}. By this, the proof is complete.
\end{proof}

\section{Power law in VSV model}\label{sec: power law}

Having the second-order Malliavin differentiability in place, we now possess all the necessary tools to analyze the behavior of implied volatility skew of a model with the sandwiched process \eqref{eq: sandwiched SDE} as stochastic volatility. Namely, we consider a (risk-free) market model with the price process $S = \{S(t),~t\in[0,T]\}$ of the form
\begin{equation}\label{eq: model}
\begin{aligned}
    S(t) &= e^{X(t)},
    \\
    X(t) &= x_0 + rt -\frac{1}{2}\int_0^t Y^2(s)ds + \int_0^t Y(s) \left(\rho dB_1(s) + \sqrt{1-\rho^2}dB_2(s)\right),
    \\
    Y(t) & = y_0 + \int_0^t b(s, Y(s))ds + \int_0^t \mathcal K(t,s)dB_1(s),
\end{aligned}
\end{equation}
where $B_1$, $B_2$ are two independent Brownian motions, $X = \{X(t),~t\in[0,T]\}$ denotes the (risk-free) log-price of some asset starting from some level $x_0\in\mathbb R$, $r$ is a constant instantaneous interest rate, and $\rho\in(-1,1)$ is a correlation coefficient that accounts for the leverage effect. As previously, the drift $b$ and the Volterra kernel $\mathcal K$ satisfy Assumptions \ref{assum: K} and \ref{assum: b}.

\begin{remark}
    The model \eqref{eq: model} was initially introduced in \cite{DiNunno_Mishura_Yurchenko-Tytarenko_2022} and, given the nature of the volatility process, is called the \textit{Sandwiched Volterra Volatility} (SVV) model.
\end{remark}

To establish the conditions under which \eqref{eq: model} gives power law of the short-term at-the-money implied volatility, we will apply the fundamental result \cite[Theorem 6.3]{Alos_Leon_Vives_2007} which connects the shape of the skew with the Malliavin derivative of the volatility. 

\begin{remark}\label{rem: skew notation}
    In the recent literature (see e.g. \cite{Bayer_Friz_Gatheral_2016, Delemotte_Marco_Segonne_2023, Di_Nunno_Kubilius_Mishura_Yurchenko-Tytarenko_2023, Gatheral_Jaisson_Rosenbaum_2014}), it is typical to characterize the implied volatility skew in terms of $\frac{\partial \widehat \sigma}{\partial \kappa}$ with $\kappa = \log\frac{K}{e^{r\tau + x_0}}$ being the log-moneyness. In \cite{Alos_Leon_Vives_2007}, a slightly different parametrization $\widehat\sigma_{\text{log-price}}(\tau, x_0)$ is considered with
    \[
        \widehat\sigma_{\text{log-price}}(\tau, x) = \widehat\sigma\left(\tau, \log\frac{K}{e^{r\tau}}-x\right).
    \]
    With this parametrization,
    \[
        \frac{\partial\widehat\sigma_{\text{log-price}}(\tau, x)}{\partial x} = -\frac{\partial \widehat\sigma\left(\tau, \log\frac{K}{e^{r\tau}}-x\right)}{\partial \kappa},
    \]
    i.e. the typically negative at-the-money skews for $\widehat\sigma$ are equivalent to positive $ \frac{\partial\widehat\sigma_{\text{log-price}}(\tau, x)}{\partial x}$ at $x = \log\frac{K}{e^{r\tau}}$ and the power law \eqref{eq: theoretical power law intro} is equivalent to
    \begin{equation*}
        \left|\frac{\partial \widehat \sigma_{\text{log-price}}}{\partial x} (\tau, x)\right|_{x = \log\frac{K}{e^{r\tau}}} = O(\tau^{-\frac{1}{2} + H}), \quad \tau \to 0.
    \end{equation*}
\end{remark}

With Remark \ref{rem: skew notation} in mind, let us provide a slightly adjusted version of \cite[Theorem 6.3]{Alos_Leon_Vives_2007}.

\begin{theorem}\label{th: power law and derivative}
    Consider a risk-free log-price
    \begin{equation}\label{eq: general model}
        X(t) = x_0 + rt -\frac{1}{2}\int_0^t \sigma^2(s)ds + \int_0^t \sigma(s) \left(\rho dB_1(s) + \sqrt{1-\rho^2}dB_2(s)\right),
    \end{equation}
    where $B_1$, $B_2$ are two independent Brownian motions, $x_0\in\mathbb R$ is a deterministic initial value, $r$ is an instantaneous interest rate, $\rho\in(-1,1)$ is a correlation coefficient and $\sigma = \{\sigma(t),~t\in[0,T]\}$ is a square-integrable stochastic process with right-continuous trajectories adapted to the filtration $\mathcal F = \{\mathcal F_t,~t\in[0,T]\}$ generated by $B_1$. 
    
    Assume that
    \begin{itemize}
        \item[(H1)] $\sigma \in \mathbb L^{2,4}$ with respect to $B_1$;
        \item[(H2)] there exists a constant $\varphi_*>0$ such that, with probability 1, $\sigma(t)>\varphi_*$ for all $t\in[0,T]$;
        \item[(H3)] there exists a constant $H\in\left(0, \frac 1 2\right)$ such that, with probability 1, for any $0 < s < t < T$,
        \begin{align}
            \mathbb E\left[ (D_s \sigma(t))^2\right] &\le \frac{C}{(t-s)^{1-2H}},\label{eq: D1 Holder}
            \\
            \mathbb E\left[(D_r D_s \sigma(t))^2\right] & \le C\left(\frac{t-r}{t-s}\right)^{1 - 2H},\label{eq: D2 Holder}
        \end{align}    
        where $C>0$ is some constant;
        
        \item[(H4)] $\sigma$ has a.s. right-continuous trajectories;
        \item[(H5)] $\sup_{r,s,t \in[0,\tau]} \mathbb E\left[(\sigma(s)\sigma(t)-\sigma^2(r))^2\right] \to 0$ when $\tau \to 0+$.
    \end{itemize}
    Finally, assume that there exists a constant $K_{\sigma} > 0$ such that, with probability 1,
    \begin{equation}\label{eq: Mal der limit}
        \frac{1}{\tau^{\frac{3}{2} + H}} \int_{0}^\tau \int_s^\tau \mathbb E\left[ D_s \sigma(t)\right]dtds - K_{\sigma} \to 0, \quad \tau \to 0+.
    \end{equation}
    Then, with probability 1,
    \[
        \lim_{\tau \to 0} \tau^{\frac{1}{2} - H} \frac{\partial \widehat \sigma_{\text{log-price}}}{\partial x} (\tau, x)  \bigg|_{x = \log\frac{K}{e^{r\tau}}}  = - \frac{\rho}{\sigma(0)} K_{\sigma}.
    \]
    In particular, if $\rho K_{\sigma} < 0$, the at-the-money implied volatility skew exhibits the power law behavior with the correct sign of the skew.
\end{theorem}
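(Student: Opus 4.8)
The plan is to obtain the statement as a direct reformulation of \cite[Theorem 6.3]{Alos_Leon_Vives_2007}, the only genuine work being the translation between the two parametrizations of Remark \ref{rem: skew notation} and the verification that our hypotheses (H1)--(H5) together with \eqref{eq: Mal der limit} are exactly those required in the original source. The underlying mechanism is that the Alòs--León--Vives decomposition formula expresses the option price as a Black--Scholes price evaluated at a future-average-volatility, plus a correction driven by the Malliavin derivative of the volatility; differentiating the resulting implied-volatility expansion in the strike and taking $\tau\to0$ isolates a single leading Malliavin functional, which is precisely the one normalized in \eqref{eq: Mal der limit}.

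First I would write down the conclusion of \cite[Theorem 6.3]{Alos_Leon_Vives_2007} in its original form, which identifies $\lim_{\tau\to 0}\tau^{1/2-H}\,\partial_x \widehat\sigma_{\text{log-price}}(\tau,x)$ at the at-the-money point as a constant multiple of the limit in \eqref{eq: Mal der limit}. The roles of the five hypotheses are then: (H1), (H4) and (H5) guarantee that the decomposition applies and that the spot approximation $\sigma(t)\to\sigma(0)$ holds in $L^2$ as $\tau\to0$; (H2) keeps the normalizing denominator $\sigma(0)$ bounded away from zero; and the Hölder-type bounds \eqref{eq: D1 Holder}--\eqref{eq: D2 Holder} in (H3) are exactly what forces the remainder terms of the expansion to be of lower order than $\tau^{3/2+H}$, so that only the leading term survives after multiplication by $\tau^{1/2-H}$. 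A quick consistency check confirms the exponents: \eqref{eq: D1 Holder} gives $|\mathbb E[D_s\sigma(t)]|\lesssim (t-s)^{H-1/2}$, so $\int_0^\tau\int_s^\tau\mathbb E[D_s\sigma(t)]\,dt\,ds \sim \tau^{3/2+H}$, matching the normalization in \eqref{eq: Mal der limit}.

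Second, I would perform the change of parametrization. By Remark \ref{rem: skew notation} we have $\partial_x\widehat\sigma_{\text{log-price}}=-\partial_\kappa\widehat\sigma$, and evaluating at $x=\log(K/e^{r\tau})$ corresponds to the at-the-money strike; this sign flip is what converts the leading constant into $-\rho K_\sigma/\sigma(0)$ and explains the stated correspondence between negative $\partial_\kappa\widehat\sigma$ skews and positive $\partial_x\widehat\sigma_{\text{log-price}}$. The factor $\rho$ enters through the leverage term $\rho\int_0^\cdot\sigma\,dB_1$ in \eqref{eq: general model}, while the factor $1/\sigma(0)$ arises from the Black--Scholes vega normalization as $\tau\to0$. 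Feeding the assumed convergence \eqref{eq: Mal der limit} into the original limit then yields the claimed value directly.

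The main obstacle is the bookkeeping of the normalizing constant rather than any single sharp estimate. The original result is typically phrased through the Malliavin derivative of the variance $\sigma^2$, whereas \eqref{eq: Mal der limit} is written via $D_s\sigma(t)$, so I would pass through the chain rule $D_s\sigma^2(t)=2\sigma(t)D_s\sigma(t)$ together with the short-time replacement $\sigma(t)\approx\sigma(0)$ (justified by (H5)) to reconcile both the power $\tau^{3/2+H}$ and the leading factor $1/\sigma(0)$ with the normalization in \cite{Alos_Leon_Vives_2007}. Since any mismatch in the exponent of $\tau$ or in this leading factor would corrupt the final limit, essentially all the care goes into aligning the definition of the implied volatility, the identification of $\sigma(0)$ as the spot volatility, and the exact growth exponents in (H3) so that the expansion produces precisely the rate $\tau^{-1/2+H}$ and the coefficient $-\rho K_\sigma/\sigma(0)$.
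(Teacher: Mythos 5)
Your proposal takes essentially the same route as the paper: the paper offers no independent proof of this theorem, presenting it precisely as you do — as a restatement of \cite[Theorem 6.3]{Alos_Leon_Vives_2007} specialized to the continuous (no-jump) case with $t_0=0$, combined with the sign/parametrization translation of Remark \ref{rem: skew notation}, the hypotheses (H1)--(H5) and the normalized limit condition being carried over nearly verbatim. One small note: the chain-rule reconciliation $D_s\sigma^2(t)=2\sigma(t)D_s\sigma(t)$ that you anticipate as the ``main obstacle'' is not actually needed, since the hypotheses of the cited theorem, including the limit condition normalized by $\tau^{3/2+H}$, are already formulated in terms of $D_s\sigma(t)$ itself rather than the variance.
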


\begin{remark}
    The original formulation of  \cite[Theorem 6.3]{Alos_Leon_Vives_2007} is slightly more general than Theorem \ref{th: power law and derivative} above in the sense that
    \begin{itemize}
        \item[1)] in \cite[Theorem 6.3]{Alos_Leon_Vives_2007}, the log-price $X$ is allowed to have jumps; 

        \item[2)] the result in \cite{Alos_Leon_Vives_2007} is formulated for the \textit{future} implied volatility surfaces $\widehat \sigma_{\text{log-price}}(t_0, \tau, X(t_0))$, $t_0\ge 0$.
    \end{itemize}
    Since we are interested in the continuous model \eqref{eq: model}, we removed the jump component in \eqref{eq: general model} and, for the simplicity of notation, we put $t_0=0$.
\end{remark}

Observe that the SVV model \eqref{eq: model} automatically satisfies a number of assumptions of Theorem \ref{th: power law and derivative}:
\begin{itemize}
    \item assumption (H2) with $\varphi^* := \min_{t\in[0,T]} \varphi(t) > 0$;
    \item assumption (H4) since $Y$ is continuous a.s.;
    \item assumption (H1) by the results of Section \ref{sec: Malliavin} above.
\end{itemize}
Therefore, it remains to check (H3), (H5), and \eqref{eq: Mal der limit}. Naturally, given the shape of the Malliavin derivative \eqref{eq: Malliavin derivative of Y(t)}, both (H3) and \eqref{eq: Mal der limit} require additional assumptions on the kernel, so let us start with (H5).

\begin{proposition}
    Let Assumptions \ref{assum: K} and \ref{assum: b} hold. Then with probability 1,
    \begin{align*}
        \sup_{r,s,t \in[0,\tau]} \mathbb E\left[(Y(s)Y(t)-Y^2(r))^2\right] \to 0, \quad \tau \to 0.
    \end{align*}
\end{proposition}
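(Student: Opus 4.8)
The plan is to reduce the statement to a uniform $L^2$ modulus-of-continuity estimate for $Y$ on $[0,\tau]$. First I would exploit the boundedness of $Y$: by the sandwich property \eqref{eq: sandwich property}, with probability $1$ we have $\varphi(t) < Y(t) < \psi(t)$ for all $t$, and since $\varphi,\psi$ are continuous on the compact interval $[0,T]$, there is a deterministic constant $M>0$ with $|Y(t)|\le M$ a.s. for every $t$. The algebraic identity
\[
    Y(s)Y(t) - Y^2(r) = \bigl(Y(s) - Y(r)\bigr)Y(t) + Y(r)\bigl(Y(t) - Y(r)\bigr)
\]
together with $|Y(t)|,|Y(r)|\le M$ then yields the pointwise bound $|Y(s)Y(t) - Y^2(r)| \le M\bigl(|Y(s)-Y(r)| + |Y(t)-Y(r)|\bigr)$, so that
\[
    \mathbb E\left[(Y(s)Y(t)-Y^2(r))^2\right] \le 2M^2\Bigl(\mathbb E[(Y(s)-Y(r))^2] + \mathbb E[(Y(t)-Y(r))^2]\Bigr).
\]
Hence it suffices to prove that $\sup_{u,v\in[0,\tau]}\mathbb E[(Y(v)-Y(u))^2]\to 0$ as $\tau\to 0$.

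For the modulus-of-continuity estimate, I would use the SDE \eqref{eq: sandwiched SDE} to write, for $u\le v$,
\[
    Y(v)-Y(u) = \int_u^v b(w,Y(w))\,dw + \bigl(Z(v)-Z(u)\bigr),
\]
and control the two terms separately. By Proposition \ref{prop: bounds for derivatives} there is a random variable $\xi$ with moments of all orders such that $|b(w,Y(w))|\le \xi$ uniformly in $w$, giving $\left|\int_u^v b(w,Y(w))\,dw\right|\le \xi|v-u|\le \xi\tau$. For the Gaussian Volterra part, the H\"older bound \eqref{eq: Z Holder} gives $|Z(v)-Z(u)|\le \Lambda|v-u|^\lambda \le \Lambda\tau^\lambda$ for any fixed $\lambda\in(0,H)$. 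Squaring, taking expectations, and using that $\xi$ and $\Lambda$ are both square-integrable (indeed, they have moments of all orders, see \eqref{eq: Lambda moments}), I obtain
\[
    \sup_{u,v\in[0,\tau]}\mathbb E[(Y(v)-Y(u))^2] \le 2\,\mathbb E[\xi^2]\,\tau^2 + 2\,\mathbb E[\Lambda^2]\,\tau^{2\lambda},
\]
which tends to $0$ as $\tau\to 0$.

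Combining the two displays gives $\sup_{r,s,t\in[0,\tau]}\mathbb E[(Y(s)Y(t)-Y^2(r))^2]\le C(\tau^2+\tau^{2\lambda})\to 0$, as required. The argument is essentially routine, so I do not expect a serious obstacle; the only point requiring care is that the pathwise bounds on the drift and the noise come with genuinely integrable (indeed, all-order-moment) random constants $\xi$ and $\Lambda$, which is precisely what Proposition \ref{prop: bounds for derivatives} and the moment property \eqref{eq: Lambda moments} provide. A minor bookkeeping detail is the restriction $u\le v$ in the increment formula, which is harmless since the estimate is symmetric in $u$ and $v$.
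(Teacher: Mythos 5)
Your proof is correct, and it follows the same outer skeleton as the paper's — bound $|Y|$ by $\max_{t\in[0,T]}\psi(t)$ via the sandwich property \eqref{eq: sandwich property}, split $Y(s)Y(t)-Y^2(r)$ into two increments of $Y$ with bounded prefactors, and reduce everything to an $L^2$ modulus-of-continuity estimate — but it diverges in the key ingredient for that estimate. The paper invokes an external result (\cite[Lemma 3.6]{DiNunno_Mishura_Yurchenko-Tytarenko_2023b}), which provides a pathwise H\"older bound $|Y(t_1)-Y(t_2)|\le \Upsilon|t_1-t_2|^{\lambda}$ with a random constant $\Upsilon$ having moments of all orders, and then simply squares and takes expectations. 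You instead derive the increment bound directly from the SDE \eqref{eq: sandwiched SDE}: the drift contribution is controlled pathwise by the random variable $\xi$ of Proposition \ref{prop: bounds for derivatives}, and the noise contribution by the H\"older property \eqref{eq: Z Holder} of $Z$ together with the moment bound \eqref{eq: Lambda moments} for $\Lambda$. What your route buys is self-containedness: it uses only facts already quoted or proved in this paper, and in effect reproves the relevant special case of the cited lemma (indeed, your two-term bound $\xi\tau + \Lambda\tau^{\lambda}$ is exactly a pathwise $\lambda$-H\"older estimate for $Y$ on $[0,\tau]$ with random constant $\xi T^{1-\lambda}+\Lambda$). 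What the paper's route buys is brevity, at the cost of an extra citation; your final bound $C(\tau^{2}+\tau^{2\lambda})$ versus the paper's $C\tau^{2\lambda}$ is an immaterial difference, as both tend to zero.
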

\begin{proof}
    By \cite[Lemma 3.6]{DiNunno_Mishura_Yurchenko-Tytarenko_2023b}, there exists a positive random variable $\Upsilon = \Upsilon_T$ such that for all $t_1,t_2\in[0,T]$
    \[
        |Y(t_1) - Y(t_2)| \le \Upsilon |t_1-t_2|^{\lambda}
    \]
    and, for any $r>0$,
    \[
        \mathbb E[\Upsilon^r] < \infty.
    \]
    Therefore, given that $\max_{t\in[0,T]}Y(t) < \max_{t\in[0,T]}\psi(t)$ by \eqref{eq: sandwich property},
    \begin{align*}
        \mathbb E&\left[(Y(s)Y(t)-Y^2(r))^2\right] 
        \\
        &= \mathbb E\left[(Y(s)(Y(t)-Y(r)) + Y(r)(Y(s) - Y(r)))^2\right]
        \\
        & \le 2 \mathbb E\left[(Y^2(s)(Y(t)-Y(r))^2\right] + 2 \mathbb E\left[Y^2(r)(Y(s) - Y(r))^2\right]
        \\
        &\le 2|t-r|^{2\lambda} \max_{s\in[0,T]} \psi^2(s)\mathbb E\left[\Upsilon^2\right] + 2|s-r|^{2\lambda} \max_{s\in[0,T]} \psi^2(s)\mathbb E\left[\Upsilon^2\right]
    \end{align*}
    and hence, with probability 1,
    \begin{align*}
        \sup_{r,s,t \in[0,\tau]} \mathbb E\left[(Y(s)Y(t)-Y^2(r))^2\right] & \le 4 \tau^{2\lambda} \max_{s\in[0,T]} \psi^2(s)\mathbb E\left[\Upsilon^2\right] \to 0
    \end{align*}
    as $\tau \to 0+$.
\end{proof}

Our next step is to handle \eqref{eq: Mal der limit}.
\begin{proposition}
    Let Assumptions \ref{assum: K} and \ref{assum: b} hold and the Volterra kernel $\mathcal K$ be such that
    \begin{equation}\label{eq: Kernel explosion}
        \frac{1}{\tau^{\frac{3}{2} + H}} \int_{0}^\tau \int_s^\tau \mathcal K(t,s) dtds \to K_Y, \quad \tau \to 0+,
    \end{equation}
    where $K_Y$ is some finite constant. Then, with probability 1
    \[
        \frac{1}{\tau^{\frac{3}{2} + H}} \int_{0}^\tau \int_s^\tau \mathbb E\left[ D_s Y(t)\right]dtds - K_Y \to 0, \quad \tau \to 0+.
    \]
    
\end{proposition}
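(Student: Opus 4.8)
The plan is to show that the random quantity
\[
    \frac{1}{\tau^{\frac{3}{2}+H}}\int_0^\tau\int_s^\tau \mathbb E[D_s Y(t)]\,dt\,ds
\]
converges to the same limit $K_Y$ as the purely deterministic kernel integral in \eqref{eq: Kernel explosion}. The natural strategy is to insert the explicit formula \eqref{eq: Malliavin derivative of Y(t)} for $D_s Y(t)$ and split the double integral into the leading term coming from $\mathcal K(t,s)$ and a remainder coming from the integral correction. Concretely, I would write
\[
    \mathbb E[D_s Y(t)] = \mathcal K(t,s) + \mathbb E\left[\int_s^t \mathcal K(u,s) b'_y(u,Y(u))\exp\left\{\int_u^t b'_y(v,Y(v))dv\right\}du\right],
\]
so that after integrating in $s$ and $t$ the first contribution is exactly the deterministic expression in \eqref{eq: Kernel explosion}, which already converges to $K_Y$. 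Hence it suffices to prove that the remainder term, once divided by $\tau^{3/2+H}$, vanishes as $\tau\to 0+$.

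The key estimate is therefore a bound on the remainder integral. First I would note that, by Remark \ref{rem: exp bounded}, the exponential factor is bounded by $e^{cT}$, and by Proposition \ref{prop: bounds for derivatives} the factor $b'_y(u,Y(u))$ is dominated by the random variable $\xi$ having moments of all orders; in particular $\mathbb E[\,|b'_y(u,Y(u))|\,]\le \mathbb E[\xi]<\infty$ uniformly in $u$. Applying Fubini's theorem (justified by the integrability provided by Proposition \ref{prop: bounds for derivatives} and Assumption \ref{assum: K}) and these bounds, the remainder is controlled by a constant multiple of
\[
    \int_0^\tau\int_s^\tau\int_s^t |\mathcal K(u,s)|\,du\,dt\,ds.
\]
Using the Cauchy--Schwarz inequality together with \eqref{eq: sup of K} from Remark \ref{rem: K}, the inner integral $\int_s^t |\mathcal K(u,s)|\,du$ is bounded by $(t-s)^{1/2}\big(\int_0^T \mathcal K^2(u,s)\,du\big)^{1/2}\le C(t-s)^{1/2}$, which yields an overall bound of order $\tau^{5/2}$ for the triple integral. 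Dividing by $\tau^{3/2+H}$ then gives a term of order $\tau^{1-H}$, which tends to zero since $H<1$.

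The main obstacle, and the only genuinely delicate point, is the justification of interchanging expectation and integration (Fubini) and ensuring that all the bounds hold uniformly in the relevant variables so that they survive the passage to the limit. This requires the moment control from Proposition \ref{prop: bounds for derivatives} together with the square-integrability in Assumption \ref{assum: K}(K1) and its uniform consequence \eqref{eq: sup of K}; once these are in place the estimates are routine. I would also remark that the convergence is deterministic (the statement says ``with probability 1'' but the limiting object $K_Y$ is a constant), so strictly speaking it is the $\mathbb E[D_s Y(t)]$ that appears and no almost-sure subtlety arises beyond verifying the integrability that legitimizes Fubini. Collecting the leading term and the vanishing remainder completes the proof.
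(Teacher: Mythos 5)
Your overall strategy coincides with the paper's: insert the explicit formula \eqref{eq: Malliavin derivative of Y(t)}, note that the $\mathcal K(t,s)$ part reproduces exactly the quantity in \eqref{eq: Kernel explosion}, bound the integrand of the correction term by $e^{cT}\xi$ using Remark \ref{rem: exp bounded} and Proposition \ref{prop: bounds for derivatives}, and show that the remainder is $o(\tau^{\frac32+H})$. However, your key kernel estimate, as written, is not justified and is in fact false in general. You bound
\[
    \int_s^t |\mathcal K(u,s)|\,du \le (t-s)^{\frac12}\left(\int_0^T \mathcal K^2(u,s)\,du\right)^{\frac12}
\]
and then claim the second factor is bounded by a constant ``by \eqref{eq: sup of K}''. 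But \eqref{eq: sup of K} states $\sup_{t\in[0,T]}\int_0^T\mathcal K^2(t,s)\,ds<\infty$: there the integration runs over the \emph{second} argument of the kernel, uniformly in the first, whereas Cauchy--Schwarz in your step produces $\int_0^T\mathcal K^2(u,s)\,du$, an integral over the \emph{first} argument with the second one fixed. For a non-symmetric Volterra kernel these are not interchangeable, and the quantity you need can blow up under Assumption \ref{assum: K}: for instance, $\mathcal K(t,s)=s^{-1/4}\mathbbm 1_{s<t}$ satisfies (K1)--(K2), yet $\int_0^T \mathcal K^2(u,s)\,du=(T-s)s^{-1/2}\to\infty$ as $s\to 0+$. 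So the inequality $\le C(t-s)^{1/2}$ does not follow from anything you cite, and the proof has a genuine gap at its central estimate.

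The gap is fixable, in either of two ways. First, you can keep your route but apply Cauchy--Schwarz to the double integral in $(u,s)$ jointly: after swapping the order of the $t$- and $u$-integrations, the remainder is at most $C\,\mathbb E[\xi]\,\tau\int_0^\tau\int_s^\tau|\mathcal K(u,s)|\,du\,ds$, and
\[
    \int_0^\tau\int_s^\tau|\mathcal K(u,s)|\,du\,ds \le \tau\left(\int_0^\tau\int_0^u\mathcal K^2(u,s)\,ds\,du\right)^{\frac12} \le C\tau^{\frac32},
\]
where now the inner integral is over the second argument and \eqref{eq: sup of K} applies; this recovers your claimed $O(\tau^{5/2})$ bound and hence the remainder divided by $\tau^{\frac32+H}$ is $O(\tau^{1-H})\to 0$. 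Second, the paper itself avoids any new kernel estimate: after the same Fubini swap it bounds $\int_u^\tau|\mathbb E[F_1(t,u)]|\,dt\le C\,\mathbb E[\xi]\,\tau$ and then reuses the hypothesis \eqref{eq: Kernel explosion} itself, which guarantees that $\tau^{-\frac32-H}\int_0^\tau\int_s^\tau\mathcal K(u,s)\,du\,ds$ stays bounded, so the remainder contribution is $O(\tau)\to 0$. Either repair is short, but as submitted your argument rests on a uniform bound that the stated assumptions do not provide.
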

\begin{proof}
    Recall that
    \[
        F_1(t,u) := b'_{y}(u, Y(u)) \exp\left\{ \int_u^t b'_y(v, Y(v))dv\right\}
    \]
    and that, by Proposition \eqref{prop: bounds for derivatives},
    \begin{equation}\label{eq: F1 bound}
        |F_1(t,u)| \le e^{cT}\xi, 
    \end{equation}
    where $c:= \max_{(t,y)\in\overline{\mathcal D}}a'_y(t,y)$. Then we can write
    \begin{align*}
        \frac{1}{\tau^{\frac{3}{2} + H}}& \int_{0}^\tau \int_s^\tau \mathbb E\left[ D_s Y(t)\right]dtds 
        \\
        & =  \frac{1}{\tau^{\frac{3}{2} + H}} \int_{0}^\tau \int_s^\tau \mathcal K(t,s) dtds 
        \\
        &\quad + \frac{1}{\tau^{\frac{3}{2} + H}}\int_{0}^\tau \int_s^\tau \int_s^t \mathcal K(u,s) \mathbb E\left[ F_1(t,u)\right]dudtds
        \\
        & = \frac{1}{\tau^{\frac{3}{2} + H}} \int_{0}^\tau \int_s^\tau \mathcal K(t,s) dtds 
        \\
        &\quad + \frac{1}{\tau^{\frac{3}{2} + H}}\int_{0}^\tau \int_s^\tau \mathcal K(u,s) \left(\int_u^\tau  \mathbb E\left[ F_1(t,u)\right]dt\right) du ds.
    \end{align*}
    The term $\frac{1}{\tau^{\frac{3}{2} + H}} \int_{0}^\tau \int_s^\tau \mathcal K(t,s) dtds$ converges to $K_Y$ by \eqref{eq: Kernel explosion}. As for the second term, note that, with probability 1, for any $u\in[0, \tau]$, 
    \begin{align*}
        \int_u^\tau |\mathbb E\left[ F_1(t,u)\right]|dt &\le C\mathbb E\left[\xi\right] \tau  
    \end{align*}
    and hence, given \eqref{eq: Kernel explosion}, with probability 1,
    \[
        \frac{1}{\tau^{\frac{3}{2} + H}}\int_{0}^\tau \int_s^\tau \mathcal K(u,s) \left(\int_u^\tau  \mathbb E\left[ F_1(t,u)\right]dt\right) du ds \to 0, \quad \tau\to 0+,
    \]
    which ends the proof.
\end{proof}

Finally, let us deal with (H3). 
\begin{proposition}
    Let Assumptions \ref{assum: K} and \ref{assum: b} hold with $H\in \left(\frac 1 6, \frac{1}{2}\right)$ and the Volterra kernel $\mathcal K$ be such that for any $0\le s < t \le T$
    \begin{equation}\label{eq: Kernel Holder}
        |\mathcal K(t,s)| \le C|t-s|^{-\frac{1}{2}+H}
    \end{equation}
    for some constant $C>0$. Then the hypothesis (H3) from Theorem \ref{th: power law and derivative} holds for the volatility process $\sigma = Y$.
\end{proposition}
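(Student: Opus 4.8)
The plan is to verify the two estimates \eqref{eq: D1 Holder} and \eqref{eq: D2 Holder} of (H3) directly from the explicit expressions for $D_sY(t)$ in \eqref{eq: Malliavin derivative of Y(t)} and for $D_rD_sY(t)$ in \eqref{eq: second Malliavin derivative}. The guiding principle is the same throughout: bound every occurrence of $b'_y$, $b''_{yy}$ and the exponential factors by the random variable $\xi$ from Proposition \ref{prop: bounds for derivatives} (which has moments of all orders) and by $e^{cT}$, thereby pulling all the randomness out of the integrals and reducing each estimate to a purely deterministic computation governed by the kernel bound \eqref{eq: Kernel Holder}. The single decisive ingredient is the elementary consequence of \eqref{eq: Kernel Holder} that $\int_s^v |\mathcal K(u,s)|\,du \le C(v-s)^{\frac12+H}$ for $0\le s<v\le T$, which, after absorbing one power of $(v-s)\le T$, yields the pathwise estimate
\begin{equation*}
    |D_rY(v)| \le C(1+\xi)(v-r)^{-\frac12+H}.
\end{equation*}

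For \eqref{eq: D1 Holder}, I would square \eqref{eq: Malliavin derivative of Y(t)}, use $|F_1(t,u)|\le e^{cT}\xi$ from \eqref{eq: F1 bound} together with the two estimates above to obtain $(D_sY(t))^2 \le 2\mathcal K^2(t,s) + C\xi^2(t-s)^{1+2H}$. Taking expectations and invoking $\mathcal K^2(t,s)\le C(t-s)^{-1+2H}$ and $(t-s)^{1+2H}\le T^2(t-s)^{-1+2H}$ immediately gives $\mathbb E[(D_sY(t))^2]\le C(t-s)^{-1+2H}$, which is \eqref{eq: D1 Holder}. This part is routine and uses only $H<\tfrac12$.

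The real work is \eqref{eq: D2 Holder}. Inserting the pathwise bound for $|D_rY|$ and the bounds $|F_1|,|F_2|\le e^{cT}\xi$, $|b''_{yy}|\le\xi$ into \eqref{eq: second Malliavin derivative}, the two summands are controlled pathwise by
\begin{align*}
    |I| &\le C\xi^2(1+\xi)\int_s^t (u-s)^{-\frac12+H}\Bigl(\int_u^t (v-r)^{-\frac12+H}dv\Bigr)du \le C\xi^2(1+\xi)(t-r)^{\frac12+H}(t-s)^{\frac12+H}, \\
    |II| &\le C\xi(1+\xi)\int_{\max(r,s)}^t (u-s)^{-\frac12+H}(u-r)^{-\frac12+H}du \le C\xi(1+\xi)\bigl(t-\max(r,s)\bigr)^{2H},
\end{align*}
where for $|II|$ I have used that $D_rY(u)=0$ for $u<r$ together with $(u-s)^{-\frac12+H}(u-r)^{-\frac12+H}\le (u-\max(r,s))^{-1+2H}$. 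Squaring and taking expectations (legitimate since $\xi$ has all moments) reduces \eqref{eq: D2 Holder} to showing that the deterministic quantities $(t-r)^{1+2H}(t-s)^{1+2H}$ and $(t-\max(r,s))^{4H}$ are both dominated by $(t-r)^{1-2H}(t-s)^{2H-1}$.

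The first is harmless: the quotient equals $(t-r)^{4H}(t-s)^{2}$, bounded on $[0,T]^2$ for every $H\in(0,\tfrac12)$. The $|II|$ term is where the hypothesis $H>\tfrac16$ becomes essential, and I expect this to be the main obstacle. Writing $t-\max(r,s)=\min(t-r,t-s)$ and distinguishing the two orderings: when $r\le s$ one has $t-\max(r,s)=t-s$, so the quotient $(t-s)^{2H+1}(t-r)^{2H-1}$ is bounded by $(t-s)^{4H}\le T^{4H}$ (using $t-r\ge t-s$ and $2H-1<0$), valid for all $H$; when $s<r$ one has $t-\max(r,s)=t-r$ and the quotient becomes $(t-r)^{6H-1}(t-s)^{1-2H}$, which stays bounded as $r\to t$ \emph{only} when $6H-1\ge 0$, i.e. $H\ge\tfrac16$. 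This is precisely the role of the assumption $H\in(\tfrac16,\tfrac12)$. Collecting the two summands then gives $\mathbb E[(D_rD_sY(t))^2]\le C(t-r)^{1-2H}(t-s)^{2H-1}=C\bigl(\tfrac{t-r}{t-s}\bigr)^{1-2H}$, which is \eqref{eq: D2 Holder} and completes the verification of (H3).
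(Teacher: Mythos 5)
Your proposal is correct and follows essentially the same route as the paper's proof: the same pathwise bounds $|D_rY(v)|\le C(1+\xi)(v-r)^{-\frac12+H}$ and $|F_1|,|F_2|,|b''_{yy}|\le C\xi$, the same two-term decomposition of $D_rD_sY(t)$, and the same case analysis $r\le s$ versus $s<r$ isolating $H>\tfrac16$ (your exponent $6H-1\ge 0$ is the paper's $3H-\tfrac12\ge 0$ after squaring). The only cosmetic differences are that you square before comparing exponents and merge the two cases of the second integral via $(u-\max(r,s))^{-1+2H}$ before splitting, whereas the paper splits first.
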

\begin{proof}
    Fix $0< r,s < t$. Then, taking into account \eqref{eq: F1 bound}, with probability 1,
    \begin{equation}\label{eq: DY Holder}
    \begin{aligned}
        |D_s Y(t)| & \le |\mathcal K(t,s)| + \int_s^t |\mathcal K(u,s)| |F_1(t,u)| du
        \\
        &\le C\left(|t-s|^{-\frac{1}{2}+H} + \xi \int_s^t |u-s|^{-\frac{1}{2}+H}du \right)
        \\
        &\le C(1+T\xi)|t-s|^{-\frac{1}{2}+H}
        \\
        &=: \zeta |t-s|^{-\frac{1}{2}+H},
    \end{aligned}    
    \end{equation}
    which immediately implies \eqref{eq: D1 Holder}. Next, by Proposition \ref{prop: bounds for derivatives},
    \[
        |b''_{yy}(v, Y(v))| \le \xi
    \]
    for any $v\in[0,T]$ and, for any $0\le u \le t \le T$,
    \begin{align*}
        |F_2(t,u)| &= \left|b''_{yy}(u, Y(u)) \exp\left\{ \int_u^t b'_y(v, Y(v))dv\right\}\right|
        \\
        & \le e^{cT}\xi 
    \end{align*}
    with $c:= \max_{(t,y)\in\overline{\mathcal D}}a'_y(t,y)$, so we can write
    \begin{align*}
        |D_r D_s Y(t)| & \le \int_s^t  |\mathcal K(u,s)| |F_1(t,u)|  \left(\int_u^t |b''_{yy}(v, Y(v))| |D_r Y(v)|dv\right)du 
        \\
        & \quad + \int_s^t |\mathcal K(u,s)| |F_2(t,u)| |D_r Y(u)| du
        \\
        &\le C\left(\xi^2  \int_s^t |\mathcal K(u,s)| \left(\int_u^t |D_r Y(v)|dv\right)du + \xi \int_s^t |\mathcal K(u,s)| |D_r Y(u)| du\right)
        \\
        & = C\left(\xi^2  \int_s^t |\mathcal K(u,s)| \left(\int_{u\vee r}^t |D_r Y(v)|dv\right)du + \xi \int_{r \vee s}^t |\mathcal K(u,s)| |D_r Y(u)| du\right).
    \end{align*}
    Taking into account \eqref{eq: Kernel Holder} and \eqref{eq: DY Holder},
    \begin{align*}
        |D_r D_s Y(t)| & \le C\bigg(\xi^2\zeta  \int_s^t |u-s|^{-\frac 1 2 + H} \left(\int_{u \vee r}^t |v-r|^{-\frac 1 2 + H} dv\right)du 
        \\
        &\qquad + \xi \zeta \int_{r \vee s}^t |u-s|^{-\frac 1 2 + H} |u-r|^{-\frac 1 2 + H} du\bigg)
        \\
        &\le  C\bigg(\xi^2\zeta  \int_s^t |u-s|^{-\frac 1 2 + H} |t-r|^{\frac 1 2 + H} du
        \\
        &\qquad + \xi \zeta \int_{r \vee s}^t |u-s|^{-\frac 1 2 + H} |u-r|^{-\frac 1 2 + H} du\bigg).
    \end{align*}
    Note that
    \begin{align*}
        \int_s^t |u-s|^{-\frac 1 2 + H} |t-r|^{\frac 1 2 + H} du & \le C |t-r|^{\frac 1 2 + H} |t-s|^{\frac 1 2 + H}
        \\
        &\le C \left(\frac{t-r}{t-s}\right)^{\frac 1 2 - H}.
    \end{align*}
    As for the integral $\int_{r \vee s}^t |u-s|^{-\frac 1 2 + H} |u-r|^{-\frac 1 2 + H} du$, we have two cases:
    \begin{itemize}
        \item if $0 < r \le s < t$, we can write
        \begin{align*}
            \int_s^t |u-s|^{-\frac 1 2 + H} |u-r|^{-\frac 1 2 + H} du & \le \int_s^t |u-s|^{-1 + 2H}du
            \\
            & \le C(t-s)^{2H} \le C \left(\frac{t-r}{t-s}\right)^{\frac 1 2 - H};
        \end{align*}

        \item similarly, if $0 < s < r < t$ and given that $H>\frac 1 6$, we have
        \begin{align*}
            \int_r^t |u-s|^{-\frac 1 2 + H} |u-r|^{-\frac 1 2 + H} du & \le \int_r^t |u-r|^{-1 + 2H}du
            \\
            & \le C(t-r)^{2H} \le C \left(\frac{t-r}{t-s}\right)^{\frac 1 2 - H}.
        \end{align*}
        
    \end{itemize}
    In any case,
    \begin{align*}
        |D_r D_s Y(t)| & \le  C\xi \zeta (\xi + 1) \left(\frac{t-r}{t-s}\right)^{\frac 1 2 - H},
    \end{align*}
    where $\xi$ and $\zeta$ are random variables having all moments, and hence \eqref{eq: D2 Holder} holds.
\end{proof}

All the findings of this Section can now be summarized in the following theorem which should be regarded as the main result of the paper.

\begin{theorem}\label{th: main theorem}
     Let Assumptions \ref{assum: K} and \ref{assum: b} hold with $H\in \left(\frac 1 6, \frac{1}{2}\right)$. Assume also that $\rho \ne 0$ in \eqref{eq: model} and that the Volterra kernel $\mathcal K$ is such that for any $0 \le s < t \le T$
    \begin{equation*}
        |\mathcal K(t,s)| \le C|t-s|^{-\frac{1}{2}+H}
    \end{equation*}
    for some constant $C>0$ and \eqref{eq: Kernel explosion} holds with $\rho K_Y < 0$. Then the SVV model \eqref{eq: model} reproduces the power law of the at-the-money implied volatility skew with the correct sign.
\end{theorem}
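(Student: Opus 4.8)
The plan is to read Theorem \ref{th: main theorem} as the synthesis of the verifications carried out in the three preceding propositions and then to invoke the Alòs--León--Vives criterion, Theorem \ref{th: power law and derivative}, with $\sigma = Y$ and $B_1$ as the underlying Brownian motion. First I would observe that the volatility component of the SVV model \eqref{eq: model} is exactly the sandwiched process \eqref{eq: sandwiched SDE} driven by $B_1$, so \eqref{eq: model} is a special case of \eqref{eq: general model} with $\sigma(t) = Y(t)$ and, in particular, $\sigma(0) = y_0$.

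Next I would check the hypotheses (H1)--(H5) and the limit \eqref{eq: Mal der limit} one at a time, each of which has already been reduced to an established fact. Hypothesis (H1) follows from the final corollary of Section \ref{sec: Malliavin}, which gives $Y\in\mathbb L^{2,p}$ for every $p\ge 2$ and hence for $p=4$. Hypothesis (H2) holds with $\varphi_* := \min_{t\in[0,T]}\varphi(t) > 0$ by the sandwich property \eqref{eq: sandwich property}, while (H4) holds because $Y$ has a.s. continuous trajectories. Hypothesis (H5) is precisely the content of the first proposition of this section. The decay estimates \eqref{eq: D1 Holder} and \eqref{eq: D2 Holder} comprising (H3) are supplied by the third proposition; this is where the standing assumptions $|\mathcal K(t,s)|\le C|t-s|^{-1/2+H}$ and $H>1/6$ are used. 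Finally, the limit \eqref{eq: Mal der limit} is furnished by the second proposition under the explosion hypothesis \eqref{eq: Kernel explosion}, and it identifies the constant there as $K_\sigma = K_Y$.

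With all hypotheses in force, Theorem \ref{th: power law and derivative} applies verbatim and yields
\[
    \lim_{\tau\to 0}\tau^{\frac12 - H}\,\frac{\partial\widehat\sigma_{\text{log-price}}}{\partial x}(\tau,x)\bigg|_{x=\log\frac{K}{e^{r\tau}}} = -\frac{\rho}{y_0}\,K_Y.
\]
The closing step is the sign bookkeeping. Since $y_0>0$ and the standing assumption is $\rho K_Y<0$, the right-hand side is a strictly positive finite constant; thus the skew is genuinely of order $\tau^{-1/2+H}$ as $\tau\to 0$, which is the power law. Moreover, by the parametrization identity in Remark \ref{rem: skew notation}, a positive $\partial\widehat\sigma_{\text{log-price}}/\partial x$ corresponds to a negative at-the-money slope of the usual implied volatility $\widehat\sigma$, i.e. to the empirically correct sign.

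I do not expect any genuinely new analytic difficulty at this stage: the theorem is a collation, and all the hard work --- the second-order Malliavin differentiability from Section \ref{sec: Malliavin} and the moment bounds underpinning (H3) --- is already in place. The one point that deserves care, rather than a true obstacle, is the translation between the two parametrizations of implied volatility via Remark \ref{rem: skew notation}, to be sure that $\rho K_Y<0$ delivers the correct sign of the skew and not merely the correct rate of explosion.
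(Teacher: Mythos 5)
Your proposal is correct and follows essentially the same route as the paper: Theorem \ref{th: main theorem} is indeed stated as a collation of the three preceding propositions (verifying (H5), \eqref{eq: Mal der limit} with $K_\sigma = K_Y$, and (H3)) together with the automatic hypotheses (H1), (H2), (H4), after which Theorem \ref{th: power law and derivative} applies with $\sigma = Y$, $\sigma(0) = y_0 > 0$. Your sign bookkeeping via Remark \ref{rem: skew notation} --- $\rho K_Y < 0$ giving a positive limit $-\rho K_Y/y_0$ for the log-price parametrization, hence a negative at-the-money skew in $\kappa$ --- is exactly the intended reading.
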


\begin{example}
    Let $\frac 1 6 < H_0 < H_1 < ... < H_n <1$ be such that $H_0 < \frac{1}{2}$ and $\alpha_k > 0$, $k=0,...,n$. Then the kernel
    \[
        \mathcal K(t,s) = \left(\sum_{k=0}^n \alpha_k(t-s)^{H_k - \frac{1}{2}}\right)\mathbbm 1_{s<t}
    \]
    satisfies the assumptions of Theorem \ref{th: main theorem}, so the corresponding SVV model generates power law \eqref{eq: theoretical power law intro} with $H= H_0$ provided that $\rho<0$ in \eqref{eq: model}.
\end{example}

\appendix

\section{Selected results from Malliavin calculus}\label{appendix}

\subsection{Malliavin derivative and the space $\mathbb D^{k,p}$}

Hereafter, we summarize the essentials of the Malliavin derivative with respect to the classical Brownian motion. For more details, we refer the reader to the classical books \cite{Nualart_2006} or \cite{Nunno_Øksendal_Proske_2009}.

Denote $C^{(\infty)}_p(\mathbb R^n)$ the space of all infinitely differentiable functions with the derivatives of at most polynomial growth. Let $B = \{B(t),~t\in[0,T]\}$ be a standard Brownian motion. For any $h\in L^2([0,T])$, denote
\[
    B(h) := \int_0^T h(t) d B(t).
\]

\begin{definition}
    The random variables $X$ of the form
    \[
        X = f(B(h_1), ..., B(h_n)),
    \]
    where $n\ge 1$, $f \in C^{(\infty)}_p(\mathbb R^n)$ and $h_1,..., h_n \in L^2([0,T])$ are called smooth. The set of all smooth random variables is denoted by $\mathcal S$.
\end{definition}

\begin{definition}
    Let $X\in \mathcal S$. The Malliavin derivative of $X$ (with respect to $B$) is the $L^2([0,T])$-valued random variable of the form
    \[
        DX := \sum_{k=1}^n \frac{\partial f}{\partial x_k}(B(h_1), ..., B(h_n)) h_k.
    \]
\end{definition}

By \cite[Proposition 1.2.1]{Nualart_2006}, the operator $D$ is closable from $L^p(\Omega)$ to $L^p(\Omega\times[0,T])$ for any $p\ge 1$, and we use the same notation $D$ for the closure. The domain of this closure $D$ in $L^p(\Omega)$, i.e. the closure of the class $\mathcal S$ with respect to the norm 
\[
    \| X \|_{1,p} := \left( \mathbb E\left[|X|^p\right] + \mathbb E\left[ \left(\int_0^T (D_s X)^2ds\right)^{\frac{p}{2}} \right] \right)^{\frac{1}{p}},
\]
is traditionally denoted by $\mathbb D^{1,p}$. This definition can be iterated as described in \cite[p. 27]{Nualart_2006} to introduce the iterated derivative $D^k X$ as a random variable with values in $(L^2([0,T]))^{\otimes k} \sim L^2([0,T]^k)$. One can also define $\mathbb D^{k,p}$ as the completion of $\mathcal S$ with respect to the seminorm
\begin{equation}\label{app: Dkp norm}
    \| X \|_{k,p} := \left( \mathbb E\left[|X|^p\right] + \sum_{j=1}^k \mathbb E\left[ \|D^j X\|_{L^2([0,T]^k)}^p \right] \right)^{\frac{1}{p}}.
\end{equation}

Throughout the paper, we often use the following lemma which is essentially a simplified version of \cite[Proposition 1.5.5]{Nualart_2006}.

\begin{lemma}\label{lemma: moments and Malliavin differentiability}
    Let $p>1$ and $X\in\mathbb D^{1,2}$ be such that 
    \[
        \mathbb E\left[|X|^p\right] < \infty
    \]
    and
    \[
        \mathbb E\left[\left( \int_0^T (D_s X)^2 ds \right)^{\frac p 2}\right] < \infty.
    \]
    Then $X\in \mathbb D^{1,p}$.
\end{lemma}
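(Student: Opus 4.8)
The plan is to realize $X$ as the $L^2(\Omega)$-limit of a sequence of smooth random variables and then upgrade this approximation to an $L^p$-convergence in the $\|\cdot\|_{1,p}$-norm using the two finite-moment hypotheses. First I would recall that, since $X\in\mathbb D^{1,2}$, there exists by definition a sequence $X_n\in\mathcal S$ such that $X_n\to X$ in $L^2(\Omega)$ and $DX_n\to DX$ in $L^2(\Omega\times[0,T])$. To leverage the stronger $L^p$-integrability of $X$ and of $\int_0^T(D_sX)^2ds$, I would introduce a truncation: set $\phi_N$ to be a smooth function with $\phi_N(x)=x$ for $|x|\le N$ and $|\phi_N(x)|\le N+1$ everywhere, with $|\phi_N'|\le 1$, and consider $\phi_N(X)$. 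Since $X\in\mathbb D^{1,2}$ and $\phi_N$ is bounded with bounded derivative, the chain rule \cite[Proposition 1.2.3]{Nualart_2006} gives $\phi_N(X)\in\mathbb D^{1,2}$ with $D\phi_N(X)=\phi_N'(X)DX$. The point is that each $\phi_N(X)$ is bounded and its derivative is dominated by $|DX|$, so $\phi_N(X)\in\mathbb D^{1,p}$ is automatic, and the task reduces to showing $\phi_N(X)\to X$ in the $\|\cdot\|_{1,p}$-norm as $N\to\infty$.

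Next I would verify the two required $L^p$-convergences separately. For the $L^p(\Omega)$-part, note $\phi_N(X)\to X$ pointwise a.s. and $|\phi_N(X)|\le |X|+1$ with $X\in L^p(\Omega)$ by hypothesis, so dominated convergence yields $\mathbb E[|\phi_N(X)-X|^p]\to 0$. For the derivative part, I would write
\[
    D\phi_N(X)-DX=(\phi_N'(X)-1)DX,
\]
so that on the event $\{|X|\le N\}$ the difference vanishes, and in general $|\phi_N'(X)-1|\le 2$. Hence the integrand $\left(\int_0^T(D_s\phi_N(X)-D_sX)^2ds\right)^{p/2}$ is dominated by $2^p\left(\int_0^T(D_sX)^2ds\right)^{p/2}$, which is integrable by the second hypothesis, and it converges to zero a.s. since $\phi_N'(X)=1$ eventually on the a.s.-finite event $\{|X|<\infty\}$. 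Dominated convergence then gives
\[
    \mathbb E\left[\left(\int_0^T\big(D_s\phi_N(X)-D_sX\big)^2ds\right)^{\frac p2}\right]\to 0.
\]

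Combining the two displays, $\phi_N(X)\to X$ in the $\|\cdot\|_{1,p}$-norm. Since each $\phi_N(X)\in\mathbb D^{1,p}$ and $\mathbb D^{1,p}$ is complete (being the completion of $\mathcal S$ under $\|\cdot\|_{1,p}$), the limit $X$ belongs to $\mathbb D^{1,p}$, which is the claim. The main obstacle I anticipate is purely bookkeeping: one must confirm that the truncation $\phi_N$ genuinely keeps $\phi_N(X)$ inside $\mathbb D^{1,p}$ — this follows because $\phi_N(X)$ is bounded (hence in every $L^p$) and its Malliavin derivative $\phi_N'(X)DX$ satisfies $\mathbb E[(\int_0^T(\phi_N'(X)D_sX)^2ds)^{p/2}]\le \mathbb E[(\int_0^T(D_sX)^2ds)^{p/2}]<\infty$ by hypothesis. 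The closedness of $D$ on $L^p$, together with completeness of $\mathbb D^{1,p}$, then does the rest, and no delicate estimate beyond two applications of dominated convergence is required.
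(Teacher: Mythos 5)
There is a genuine gap, and it sits exactly where you wave it away: the claim that $\phi_N(X)\in\mathbb D^{1,p}$ is ``automatic'' because $\phi_N(X)$ is bounded and $D\phi_N(X)=\phi_N'(X)DX\in L^p(\Omega\times[0,T])$. That claim is an instance of the very lemma you are proving, applied to the bounded random variable $\phi_N(X)$: since $\mathbb D^{1,p}$ is by definition the completion of $\mathcal S$ under $\|\cdot\|_{1,p}$, knowing that $\phi_N(X)\in\mathbb D^{1,2}$ and that its $(1,p)$-norm is finite does not by itself produce an approximating smooth sequence in the $(1,p)$-norm --- which is what membership means. Truncation cannot close this hole, because composing with $\phi_N$ tames only the \emph{value} of $X$, not its derivative: if $X_n\in\mathcal S$ witnesses $X\in\mathbb D^{1,2}$, then $D\phi_N(X_n)=\phi_N'(X_n)DX_n$, and the error term $\phi_N'(X_n)(DX_n-DX)$ is controlled only in $L^2(\Omega\times[0,T])$; there is no way to upgrade it to $L^p$ for $p>2$. (For $1<p\le 2$ the lemma is trivial anyway, since $\|\cdot\|_{1,p}\le C\|\cdot\|_{1,2}$ on a probability space, so the $\mathbb D^{1,2}$-approximating sequence already converges in $\|\cdot\|_{1,p}$; all the content --- and all of the paper's uses, which need $p\ge 2$ --- lies in $p>2$, precisely where your reduction replaces the problem by an equally hard one.) Your remaining steps (the two dominated-convergence arguments, and the appeal to closedness of $D$ from $L^p(\Omega)$ to $L^p(\Omega\times[0,T])$ plus completeness of $\mathbb D^{1,p}$) are correct as stated, but they only run once the membership $\phi_N(X)\in\mathbb D^{1,p}$ is secured --- and securing it is the whole lemma.

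For comparison: the paper does not prove this lemma at all; it presents it as a simplified version of \cite[Proposition 1.5.5]{Nualart_2006}, whose genuine content is a compactness rather than a truncation argument. For $p>1$ the spaces $L^p(\Omega)$ and $L^p(\Omega\times[0,T])$ are reflexive, so from a sequence bounded in the $(1,p)$-norm and converging in $L^p(\Omega)$ one extracts a weakly convergent subsequence of derivatives, identifies the weak limit with $DX$ by duality against the divergence operator (equivalently, by weak closedness of the graph of the closed operator $D$, via Mazur's lemma), and concludes $X\in\mathbb D^{1,p}$. An alternative standard route goes through finite-dimensional projections $\mathbb E[X\mid\mathcal F_n]$, whose Malliavin derivatives contract the $L^p(\Omega\times[0,T])$-norm, followed by mollification in finite dimensions, where smooth functions genuinely are dense in the Gaussian Sobolev norm. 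If you want to salvage your write-up, replace the ``automatic'' step by one of these arguments --- at which point the truncation $\phi_N$ becomes superfluous, since both apply to $X$ directly.
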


\subsection{Generalized Malliavin product rule}

Finally, let us prove a generalized version of the product rule from \cite[Exercise 1.2.12]{Nualart_2006} or \cite[Theorem 3.4]{Nunno_Øksendal_Proske_2009}.

\begin{lemma}\label{lemma: product rule}
    Let $X_1$, $X_2 \in \mathbb D^{1,2}$ be such that
    \begin{itemize}
        \item[(i)] $X_1X_2 \in L^2(\Omega)$;
        \item[(ii)] $X_2DX_1$, $X_1DX_2 \in L^2(\Omega\times[0,T])$.
    \end{itemize}
    Then $X_1X_2\in \mathbb D^{1,2}$ and
    \[
        D[X_1X_2] = X_2DX_1 + X_1DX_2.
    \]
    If, in addition, 
    \begin{equation*}
        \mathbb E[|X_1X_2|^p] < \infty, \quad \mathbb E\left[ \left( \int_0^T ( X_2D_uX_1 + X_1D_uX_2 )^2du \right)^{\frac p 2} \right] < \infty
    \end{equation*}
    for some $p\ge 2$, then $X_1X_2 \in \mathbb D^{1,p}$. 
\end{lemma}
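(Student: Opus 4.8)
The plan is to prove the generalized Malliavin product rule in two stages, first establishing membership in $\mathbb D^{1,2}$ with the claimed derivative formula, and then upgrading to $\mathbb D^{1,p}$ under the stronger integrability hypotheses. The natural strategy is to reduce to the classical product rule (as in \cite[Exercise 1.2.12]{Nualart_2006}) by a truncation and approximation argument, since the obstruction to applying the classical rule directly is precisely that $X_1, X_2$ need not be bounded.

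First I would introduce truncations $X_i^N := \chi_N(X_i)$ where $\chi_N \in C^\infty_b(\mathbb R)$ satisfies $\chi_N(x) = x$ for $|x|\le N$, $|\chi_N(x)|\le |x|$, and $\chi_N$ has bounded derivative with $|\chi_N'|\le 1$. By the classical chain rule \cite[Proposition 1.2.3]{Nualart_2006}, each $X_i^N \in \mathbb D^{1,2}$ with $DX_i^N = \chi_N'(X_i)DX_i$, and since these truncated variables are bounded, the classical product rule applies to give
\begin{equation*}
    D[X_1^N X_2^N] = X_2^N \chi_N'(X_1)DX_1 + X_1^N \chi_N'(X_2) DX_2.
\end{equation*}
The goal is then to pass to the limit $N\to\infty$ and invoke closedness of $D$. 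For the left-hand side, $X_1^N X_2^N \to X_1 X_2$ almost surely, and since $|X_1^N X_2^N| \le |X_1 X_2| \in L^2(\Omega)$ by hypothesis (i), dominated convergence gives convergence in $L^2(\Omega)$. For the derivative side, $\chi_N'(X_i)\to 1$ pointwise and $X_i^N \to X_i$ almost surely, so the integrand converges a.s.\ to $X_2 D X_1 + X_1 D X_2$; the dominating function is $|X_2 DX_1| + |X_1 DX_2|$, which lies in $L^2(\Omega\times[0,T])$ by hypothesis (ii), so dominated convergence yields convergence in $L^2(\Omega\times[0,T])$. Closedness of $D$ then forces $X_1X_2 \in \mathbb D^{1,2}$ with the stated formula.

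For the second assertion, once $X_1X_2\in\mathbb D^{1,2}$ is established together with the explicit formula $D[X_1X_2] = X_2DX_1 + X_1DX_2$, the upgrade to $\mathbb D^{1,p}$ follows directly from Lemma \ref{lemma: moments and Malliavin differentiability}: the two additional hypotheses are exactly $\mathbb E[|X_1X_2|^p]<\infty$ and $\mathbb E[(\int_0^T(D_u[X_1X_2])^2 du)^{p/2}]<\infty$, which are the precise conditions needed to conclude $X_1X_2\in\mathbb D^{1,p}$.

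The main obstacle is the second convergence (of the derivatives in $L^2(\Omega\times[0,T])$), specifically verifying that the almost-sure limit of $X_i^N \chi_N'(X_{3-i})$ is correctly identified and that the domination is genuinely integrable. One subtlety to handle carefully is that $\chi_N'(X_i)DX_i$ must be shown to converge to $DX_i$ in the appropriate sense, which uses both that $\chi_N'(X_i)\to 1$ a.s.\ and the bound $|\chi_N'|\le 1$; the product structure means the dominating bound must be arranged so that no single factor is left uncontrolled, which is exactly what hypothesis (ii) guarantees. Everything else is routine once the truncation is set up.
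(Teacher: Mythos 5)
Your proof is correct and follows essentially the same route as the paper's: truncate so that the classical chain rule applies, pass to the limit by dominated convergence (with hypotheses (i) and (ii) supplying exactly the dominating functions $|X_1X_2|$ and $|X_2DX_1|+|X_1DX_2|$), conclude by closedness of $D$, and upgrade to $\mathbb D^{1,p}$ via Lemma \ref{lemma: moments and Malliavin differentiability}. The only cosmetic difference is that the paper truncates the product directly with a single two-variable cutoff $f_m(x_1,x_2)=m^2\phi\left(\frac{x_1}{m}\right)\phi\left(\frac{x_2}{m}\right)$ and applies the multivariate chain rule once, whereas you truncate each factor separately and then invoke the classical product rule for bounded elements of $\mathbb D^{1,2}$ as an intermediate step --- a legitimate shortcut, since that bounded case is precisely the classical result (it also follows from the two-variable chain rule applied to $x_1x_2$ modified outside $[-N,N]^2$).
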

\begin{proof}
    Let $\phi\in C^\infty(\mathbb R)$ be a compactly supported function such that $\phi(x) = x$ whenever $|x|\le 1$ and $|\phi(x)| \le |x|$ for all $|x|>1$. For $m\ge 1$, put
    \[
        f_m(x_1, x_2) := m^2 \phi\left(\frac{x_1}{m}\right)\phi\left(\frac{x_2}{m}\right)
    \]
    and observe that both partial derivatives
    \[
        \frac{\partial f_m}{\partial x_1} (x_1,x_2) = m\phi'\left(\frac{x_1}{m}\right)\phi\left(\frac{x_2}{m}\right), \quad \frac{\partial f_m}{\partial x_2} (x_1,x_2) = m\phi\left(\frac{x_1}{m}\right)\phi'\left(\frac{x_2}{m}\right)
    \]
    are bounded. Therefore, by the classical chain rule \cite[Proposition 1.2.3]{Nualart_2006}, 
    \[
        Df_m(X_1, X_2) = m\left(\phi'\left(\frac{X_1}{m}\right)\phi\left(\frac{X_2}{m}\right) DX_1 + \phi\left(\frac{X_1}{m}\right)\phi'\left(\frac{X_2}{m}\right) DX_2\right).
    \]
    Now it is sufficient to prove that 
    \begin{equation}\label{proofeq: fm to x1x2}
        f_m(X_1,X_2) \to X_1X_2
    \end{equation}
    in $L^2(\Omega)$ and 
    \begin{equation}\label{proofeq: fm' to Dx1x2}
        m\left(\phi'\left(\frac{X_1}{m}\right)\phi\left(\frac{X_2}{m}\right) DX_1 + \phi\left(\frac{X_1}{m}\right)\phi'\left(\frac{X_2}{m}\right) DX_2\right) \to X_2DX_1 + X_1DX_2
    \end{equation}
    in $L^2(\Omega\times[0,T])$ as $m\to\infty$.

    Observe that $|f_m(X_1,X_2)| \to X_1X_2$ a.s. as $m\to\infty$ and
    \[
        |f_m(X_1,X_2)| \le X_1 X_2 \in L^2(\Omega),
    \]
    so \eqref{proofeq: fm to x1x2} holds by the dominated convergence theorem. Next, since $\phi'$ is bounded, we have that, with probability 1,
    \begin{equation*}
    \begin{aligned}
        m&\left|\phi'\left(\frac{X_1}{m}\right)\phi\left(\frac{X_2}{m}\right) DX_1 + \phi\left(\frac{X_1}{m}\right)\phi'\left(\frac{X_2}{m}\right) DX_2\right|
        \\
        &\le \max_{x\in\mathbb R}|\phi'(x)| \left( |X_2 DX_1| + |X_1DX_2|  \right) \in L^2(\Omega\times[0,T]).
    \end{aligned}    
    \end{equation*}
    Therefore, since $m\phi'\left(\frac{X_1}{m}\right)\phi\left(\frac{X_2}{m}\right) \to X_2$ a.s. and $m\phi\left(\frac{X_1}{m}\right)\phi'\left(\frac{X_2}{m}\right) \to X_1$ a.s. as $m\to\infty$, \eqref{proofeq: fm' to Dx1x2} holds by the dominated convergence, which ends the proof of the first claim.
    
    The second claim immediately follows from Lemma \ref{lemma: moments and Malliavin differentiability}.
\end{proof}

\bibliographystyle{acm}
\bibliography{biblio.bib}

\end{document}